\newcounter{myctr}
\def\myitem{\refstepcounter{myctr}\bibfont\noindent\ifnum\themyctr>9\else\phantom{0}\fi\hangindent17pt\themyctr.\enskip}
\newcommand{\ket}[1]{|{#1}\rangle}
\newcommand{\bra}[1]{\langle{#1}|}
\DeclareMathOperator{\Tr}{tr}
\DeclareMathOperator{\wt}{wt}
\newtheorem{theorem}{Theorem}
\newtheorem{proposition}{Proposition}
\newtheorem{corollary}{Corollary}
\newtheorem{definition}{Definition}
\newtheorem{conjecture}{Conjecture}
\newtheorem{example}{Example}
\begin{document}


\title{Encoding classical information in gauge subsystems of quantum codes}

\author{Andrew Nemec\footnote{Corresponding author.} \; and Andreas Klappenecker}


\maketitle


\begin{abstract}
We show how to construct hybrid quantum-classical codes from subsystem codes by encoding the classical information into the gauge qudits using gauge fixing. Unlike previous work on hybrid codes, we allow for two separate minimum distances, one for the quantum information and one for the classical information. We give an explicit construction of hybrid codes from two classical linear codes using Bacon-Casaccino subsystem codes, as well as several new examples of good hybrid code.
\end{abstract}



\markboth{Nemec and Klappenecker}
{Encoding classical information in gauge subsystems}

\section{Introduction}

Hybrid codes allow for the simultaneous transmission of both quantum and classical information across a quantum channel. Devetak and Shor \cite{Devetak2005} showed for certain small error rates that simultaneous transmission is superior to the time-sharing of a quantum channel, and subsequent work on the topic has been focused primarily on information-theoretic results \cite{Hsieh2010a, Hsieh2010b, Yard2005}. The first examples of finite-length hybrid codes were given by Kremsky, Hsieh, and Brun \cite{Kremsky2008} as a generalization of entanglement-assisted quantum codes, and later Grassl, Lu, and Zeng \cite{Grassl2017} gave multiple examples of good hybrid codes with small parameters using a codeword stabilizer construction. Surprisingly, these codes provide an advantage over optimal quantum codes regardless of the error rate.

Further examples of good hybrid codes were constructed by the authors \cite{Nemec2019, Nemec2020} over the Pauli channel and by Li, Lyles, and Poon \cite{Li2019} over a fully correlated quantum channel where the space of errors is spanned by $I^{\otimes n}$, $X^{\otimes n}$, $Y^{\otimes n}$, and $Z^{\otimes n}$. Additional work on hybrid codes from an operator-theoretic perspective has also been done by B{\'e}ny, Kempf, and Kribs \cite{Beny2007a, Beny2007b} and Majidy \cite{Majidy2018}. While few good hybrid code constructions are known, there are already multiple areas in which they can be used, including the protection of hybrid quantum memory \cite{Kuperberg2003} and the construction of hybrid secret sharing schemes \cite{Zhang2011}. Additionally, the work on higher rank matrical ranges by Cao et al. \cite{Cao2020} was inspired by hybrid codes.

Previous work on hybrid codes has assumed that both the quantum and classical information should be protected from all errors of weight up to the same minimum distance $d$. In this paper we introduce hybrid codes with two separate minimum distances for quantum and classical information. Loosening this restriction on the minimum distance allows us to construct new hybrid stabilizer codes by encoding classical information in the gauge qudits of subsystem codes, making use of gauge fixing. Using this result, we show how to construct hybrid codes from classical codes using Bacon-Casaccino subsystem codes \cite{Bacon2006b} including a family of good hybrid codes constructed using Bacon-Shor subsystem codes. We also give multiple additional examples of good hybrid codes, including a $\left[\!\left[9,3\!:\!1,3\right]\!\right]_{2}$ code that can encode one more qubit than the length 9 hybrid stabilizer code given by Grassl et al. \cite{Grassl2017}, as well as one derived from Kitaev's 18-qubit toric code \cite{Kitaev1997}. Finally, we conjecture that all hybrid stabilizer codes must satisfy a variant of the quantum Singleton bound. 

\subsection{Stabilizer Codes}
\label{stab}

A quantum code with parameters $\left(\!\left(n,K,d\right)\!\right)_{q}$ is a $K$-dimensional subspace $\mathcal{C}$ of a Hilbert space $\mathcal{H}=\mathbb{C}^{q^{n}}$ that can detect any errors on up to $d-1$ physical qudits. The most well-known class of quantum codes are the stabilizer codes \cite{Calderbank1998, Gottesman1997}. Stabilizer codes are the quantum analogues of classical additive codes, and we write their parameters as $\left[\!\left[n,k,d\right]\!\right]_{q}$, where $k=\log_{q}\!\left(K\right)$. While in general $K$ does not need to be an integral power of $q$, it will always be an integral power of $p$, the characteristic of the finite field $\mathbb{F}_{q}$.

Just as binary stabilizer codes are defined as the joint eigenspace of a subgroup of the $n$ qudit error group generated by tensor products of the Pauli matrices, nonbinary stabilizer codes are definied in a similar way using nice error bases \cite{Klappenecker2002, Klappenecker2003, Knill1996}. Let $\mathbb{F}_{q}$ be a finite field of characteristic $p$, where $q=p^{\ell}$. We define the trace function $\Tr:\mathbb{F}_{q}\rightarrow\mathbb{F}_{p}$ by $$\Tr\!\left(x\right)=\sum\limits_{i=0}^{\ell-1}x^{p^{i}}.$$ Let $a,b\in\mathbb{F}_{q}$ and denote by $\ket{x}$ the computational basis of $\mathbb{C}^{q}$ labeled by the elements $x\in\mathbb{F}_{q}$. The unitary operators $X\!\left(a\right)$ and $Z\!\left(b\right)$ are defined by $$X\!\left(a\right)\ket{x}=\ket{x+a} \text{ and } Z\!\left(b\right)\ket{x}=\omega^{\Tr\left(bx\right)}\ket{x},$$ where $\omega$ is the primitive $p$-th root of unity $e^{2\pi i/p}$. The set $$\mathcal{E}=\left\{X\!\left(a\right)Z\!\left(b\right)\mid a,b\in\mathbb{F}_{q}\right\}$$ forms a nice error basis for $\mathbb{C}^{q}$, modeling errors on a single qudit.

This can be extended to a system of $n$ qudits by taking tensor products of the elements of $\mathcal{E}$: let $\mathbf{a}=\left(a_{1},\dots,a_{n}\right)\in\mathbb{F}_{q}^{n}$ and define the unitary operators $X\!\left(\mathbf{a}\right)=X\!\left(a_{1}\right)\otimes\cdots\otimes X\!\left(a_{n}\right)$ and $Z\!\left(\mathbf{a}\right)=Z\!\left(a_{1}\right)\otimes\cdots\otimes Z\!\left(a_{n}\right)$. Then $$\mathcal{E}_{n}=\left\{X\!\left(\mathbf{a}\right)Z\!\left(\mathbf{b}\right)\mid \mathbf{a},\mathbf{b}\in\mathbb{F}_{q}^{n}\right\}$$ is a nice error basis for $\mathbb{C}^{q^{n}}$ and $$G_{n}=\left\{\omega^{c}X\!\left(\mathbf{a}\right)Z\!\left(\mathbf{b}\right)\mid \mathbf{a},\mathbf{b}\in\mathbb{F}_{q}^{n},c\in\mathbb{F}_{p}\right\}$$ is the error group generated by the elements of $\mathcal{E}_{n}$ (if $\mathbb{F}_{q}$ has characteristic $2$, replacing $\omega$ with $i$ and letting $c\in\mathbb{Z}_{4}$ produces the standard version of complex Pauli matrices). When $q$ is prime, the finite field $\mathbb{F}_{q}$ can be generated as an additive group by a single element, so $G_{n}=\left\langle\omega I, X_{j}\!\left(1\right), Z_{j}\!\left(1\right)\mid j\in\left[n\right]\right\rangle$, where $X_{j}\!\left(1\right)$ operates only on the $j$-th qubit. When $q$ is not a prime, but a prime power, any element in the field may be written as $a_{0}+a_{1}\alpha+\cdots+a_{\ell-1}\alpha^{\ell-1}$ where $a_{i}\in\mathbb{F}_{p}$ and $\alpha$ is a root of an irreducible polynomial in $\mathbb{F}_{p}$ of degree $\ell$. Using this we have $G_{n}=\left\langle\omega I, X_{j}\!\left(\alpha^{i}\right), Z_{j}\!\left(\alpha^{i}\right)\mid 0\leq i<\ell, j\in\left[n\right]\right\rangle$.

The weight $\wt\!\left(\cdot\right)$ of an element $E\in G_{n}$ is the number of tensor components of $E$ that are not scalar multiples of the identity matrix. Any two elements $E$ and $E'$ of $G_{n}$, where $E=\omega^{c}X\!\left(\mathbf{a}\right)Z\!\left(\mathbf{b}\right)$ and $E'=\omega^{c'}X\!\left(\mathbf{a}'\right)Z\!\left(\mathbf{b}'\right)$, satisfy the following commutation relation: \begin{equation*} EE'=\omega^{\Tr\left(\mathbf{b}\cdot\mathbf{a}'-\mathbf{b}'\cdot\mathbf{a}\right)}E'E. \end{equation*}

Let $\mathcal{S}$ be some abelian subgroup of $G_{n}$ that does not contain a scalar multiple of the identity matrix. A stabilizer code $\mathcal{C}$ is the joint $+1$-eigenspace of $\mathcal{S}$, that is $$\mathcal{C}=\bigcap\limits_{E\in\mathcal{S}}\left\{v\in\mathbb{C}^{q^{n}}\mid Ev=v\right\}.$$ The group $\mathcal{S}$ is called the stabilizer group of the code and has order $q^{n-k}$, generated by $\ell\left(n-k\right)$ elements of $G_{n}$.

The centralizer of the stabilizer group are those elements in $G_{n}$ that commute with every element of $\mathcal{S}$, which is traditionally denoted by $N\!\left(\mathcal{S}\right)$. The elements of $N\!\left(\mathcal{S}\right)/\mathcal{S}Z\!\left(G_{n}\right)$, where $\mathcal{S}Z\!\left(G_{n}\right)$ is the group generated by $\mathcal{S}$ and the center $Z\!\left(G_{n}\right)$ of the group $G_{n}$, are cosets whose elements are Pauli operators on the logical qudits. We denote the logical operators on the $i$-th logical qudit by $\overline{X_{i}\!\left(a\right)}$ and $\overline{Z_{i}\!\left(b\right)}$, with $a,b\in\mathbb{F}_{q}$. These operators are not unique, as any element in the same coset will have the same effect on the quantum code $\mathcal{C}$. The labeling of these operators is somewhat arbitrary, their only requirement being that they satisfy the following commutation and non-commutation relations that generalize the commutation and anticommutation relations of the Pauli matrices: $\left[\overline{X_{i}\!\left(a\right)},\overline{X_{j}\!\left(b\right)}\right]=0$, $\left[\overline{Z_{i}\!\left(a\right)},\overline{Z_{j}\!\left(b\right)}\right]=0$, $\left[\overline{X_{i}\!\left(a\right)},\overline{Z_{j}\!\left(b\right)}\right]=0$ if $i\neq j$, and $\overline{X_{i}\left(a\right)}\;\overline{Z_{i}\!\left(b\right)}=\omega^{\Tr\left(-b\cdot a\right)}\overline{Z_{i}\!\left(b\right)}\;\overline{X_{i}\left(a\right)}$. For example, the operators can be trivially relabeled by swapping the $X_{i}$ and $Z_{i}$ operators.

\subsection{Subsystem Codes}

Subsystem codes (also called operator quantum error-correcting codes) are a generalization of stabilizer codes that enforce a tensor product structure on the code subspace $Q=A\otimes B$. Quantum information is encoded into subsystem $A$, while subsystem $B$, known as the gauge subsystem, is useful for fault tolerance \cite{Aliferis2007} and designing improved decoding algorithms \cite{Sarvepalli2009}. However, no information is encoded in subsystem $B$, so in a certain sense it is unused space.


One way to view subsystem codes is through the stabilizer formalism of the previous section. Informally, a subsystem code can be viewed as a stabilizer code where only a subset of the logical qudits are used to encode quantum information. The logical qudits containing the quantum information correspond to the $K$-dimensional subsystem $A$, while the unused gauge qudits correpond to the $R$-dimensional subsystem $B$. Similar to stabilizer codes, we write the parameters of a subsystem code as $\left[\!\left[n,k,r,d\right]\!\right]_{q}$ where $k=\log_{q}\!\left(K\right)$ and $r=\log_{q}\!\left(R\right)$. A subsystem code has $\ell\left(n-k-r\right)$ mutually commuting generators $S_{i}$ that generate the abelian stabilizer group $\mathcal{S}$ of the subsystem code. The $KR$-dimensional subspace $Q$ is then the $+1$-eigenpace of the elements of the stabilizer group $\mathcal{S}$.

To induce the subsystem $A\otimes B$ on $Q$, we define the gauge group $\mathcal{G}$, which consists of those Pauli operators on $Q$ that act as the identity on $A$. These include elements in $\mathcal{S}$, as well as the logical operators on the subsystem $B$, which are generated by $\ell r$ pairs of gauge operators $G_{i}^{X}$ and $G_{i}^{Z}$ such that $G_{i}^{X}$ and $G_{j}^{Z}$ do not commute if $i=j$ and commute otherwise, $G_{i}^{X}$ and $G_{j}^{X}$ all commute, and $G_{i}^{Z}$ and $G_{j}^{Z}$ all commute. The gauge group is then given by \begin{equation*} \mathcal{G}=\left\langle \omega, \mathcal{S}, G_{i}^{X}, G_{i}^{Z}\mid i\in\left[\ell r\right]\right\rangle. \end{equation*} Each pair $G_{i}^{X}$ and $G_{i}^{Z}$ corresponds to some pair of logical operators $\overline{X_{i}\!\left(a\right)}$ and $\overline{Z_{i}\!\left(a\right)}$ on the stabilizer code $Q$, but they are written differently to better distinguish them from the logical operators on the subsystem $A$, which are given by $\mathcal{L}=N\!\left(\mathcal{S}\right)/\mathcal{G}$. For further details on the stabilizer formalism of subsystem codes, see Kribs and Poulin \cite{Kribs2013} and Poulin \cite{Poulin2005}.

\section{Hybrid Codes}

We now would like to simultaneously transmit a classical message along with our quantum information. A hybrid code has parameters $\left(\!\left(n,K\!:\!M,d\!:\!c\right)\!\right)_{q}$ if and only if it can simultaneously encode a superposition of $K$ orthogonal quantum states as well as one of $M$ different classical messages into the Hilbert space $\mathcal{H}=\mathbb{C}^{q^{n}}$, while detecting all errors of weight less than $d$ and $c$ on the quantum and classical information respectively. The hybrid code $\mathcal{C}$ may be thought of as a collection of $M$ orthogonal quantum codes $\mathcal{C}_{m}$ of dimension $K$, indexed by the classical message $m\in\left[M\right]=\left\{1,2,\dots,M\right\}$, as seen in Figure~\ref{hybridcollection}. We refer to the codes $\mathcal{C}_{m}$ as the inner codes and $\mathcal{C}=\left\{\mathcal{C}_{m}\mid m\in\left[M\right]\right\}$ as the outer code. To send a quantum state $\ket{\varphi}$ and a classical message $m$, we simply encode $\ket{\varphi}$ into the quantum code $\mathcal{C}_{m}$.

\begin{figure}[t]
\centering
\begin{tikzpicture}
\draw[draw=black] (0,0) rectangle ++(10,5);
\draw[draw=black] (1.5,1) rectangle ++(7,3);
\draw[draw=black,dotted] (1.5,1) rectangle ++(5.25,3);
\draw[draw=black,dotted] (1.5,1) rectangle ++(3.5,3);
\draw[draw=black,dotted] (1.5,1) rectangle ++(1.75,3);
\draw (9.25,4.5) node {$\mathcal{H}$};
\draw (1.25,1) node {$\mathcal{C}$};
\draw (2.375,2.5) node {$\mathcal{C}_{00}$};
\draw (4.175,2.5) node {$\mathcal{C}_{01}$};
\draw (5.875,2.5) node {$\mathcal{C}_{10}$};
\draw (7.675,2.5) node {$\mathcal{C}_{11}$};
\end{tikzpicture}
\caption{Each hybrid code $\mathcal{C}$ is a collection of orthogonal quantum codes $\mathcal{C}_{i}$ indexed by a classical message $i$, here represented as a binary string in $\left\{00,01,10,11\right\}$.}
\label{hybridcollection}
\end{figure}

If the quantum and classical minimum distances are the same (i.e., $d=c$), we write $\left(\!\left(n,K\!:\!M,d\right)\!\right)_{q}$. If both the outer code and all of the inner codes are stabilizer codes, we refer to the code as a hybrid stabilizer code and write its parameters as $\left[\!\left[n,k\!:\!m,d\!:\!c\right]\!\right]_{q}$ where $k=\log_{q}\!\left(K\right)$ and $m=\log_{q}\!\left(M\right)$.

Grassl et al.\cite{Grassl2017} presented a set of necessary and sufficient conditions for the error-correcting capabilities of hybrid codes with $d=c$ that generalize the Knill-Laflamme conditions \cite{Knill1997} for quantum codes. Here we generalize these conditions further to allow for hybrid codes with $d\leq c$.

\begin{theorem}\label{hybridknilllaflamme}
An $\left(\!\left(n,K\!:\!M,d\!:\!c\right)\!\right)_{q}$ hybrid code with $d\leq c$ can detect up to $d-1$ errors to the quantum information and up to $c-1$ errors to the classical information if and only if
\begin{enumerate}
\item $P_{a}EP_{a}=\lambda_{E,a}P_{a}$, for all $a\in\left[M\right]$ and all error operators $E$ such that $\wt\!\left(E\right)<d$, and 
\item $P_{a}FP_{b}=0$, for all $a,b\in\left[M\right]$, $a\neq b$, and all error operators $F$ such that $\wt\!\left(F\right)<c$.
\end{enumerate}
\end{theorem}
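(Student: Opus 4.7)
The plan is to establish both directions by extending the Knill-Laflamme-style argument of Grassl et al.\ to respect the two distinct weight thresholds, with condition 1 playing the role of the "no-disturbance" condition within each inner code and condition 2 playing the role of the "orthogonality" condition across inner codes.

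For the sufficiency direction, I would assume that conditions 1 and 2 hold and construct an explicit two-stage detection procedure. The key observation is that condition 2 forces the subspaces $F\,\mathcal{C}_a$ to remain mutually orthogonal across different classical messages $a$ for every error $F$ of weight less than $c$; consequently a projective measurement onto the collection of inner codes (which is well defined after any such $F$ has acted) recovers the classical label $m$ faithfully. Once $m$ has been identified, condition 1 is exactly the Knill-Laflamme detection condition for the quantum code $\mathcal{C}_m$ against errors of weight below $d$, so the standard detection argument inside $\mathcal{C}_m$ flags quantum errors without disturbing the encoded superposition.

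For the necessity direction, I would assume the hybrid code has the stated detection capabilities and derive each condition separately. Restricting attention to a fixed classical message $a$, the hybrid code reduces to an ordinary quantum code $\mathcal{C}_a$ with distance $d$, so the usual Knill-Laflamme theorem applied to $\mathcal{C}_a$ yields condition 1 directly. For condition 2, I would argue contrapositively: if $P_a F P_b \neq 0$ for some $a \neq b$ and some $F$ with $\wt(F) < c$, then there exist codewords $\ket{\psi}\in\mathcal{C}_b$ and $\ket{\phi}\in\mathcal{C}_a$ with $\bra{\phi}F\ket{\psi}\neq 0$, so the corrupted state $F\ket{\psi}$ originating from message $b$ has nonzero overlap with $\mathcal{C}_a$, and no measurement can perfectly distinguish the classical labels, contradicting classical detectability.

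The main subtlety to handle carefully is ensuring in the sufficiency direction that the two-stage measurement does not disturb the quantum information when extracting the classical label. Because the inner codes are already mutually orthogonal in $\mathcal{H}$ and condition 2 guarantees that this orthogonality is preserved by any error of weight below $c$, the classical-readout measurement and the subsequent quantum-syndrome measurement can be realized as compatible projective measurements, so their combined action on the encoded state reduces to the scalar factors permitted by condition 1. The asymmetric hypothesis $d\leq c$ is consistent here precisely because the classical-identification measurement operates at the finer threshold, while the quantum detection step only needs to correct errors on a fixed inner code. The remainder is a direct adaptation of the equal-distance proof of Grassl et al.\ with the weight bounds decoupled.
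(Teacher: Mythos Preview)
Your proposal is correct and follows essentially the same approach as the paper's proof: both directions invoke the Knill--Laflamme conditions for condition~1 (quantum detection within each inner code) and use the orthogonality $P_a \perp F\mathcal{C}_b$ from condition~2 to argue that a projective measurement onto the inner codes never mis-identifies the classical label, with the necessity direction handled contrapositively in each case. One small wording caveat: condition~2 gives $\mathcal{C}_a \perp F\mathcal{C}_b$ for $a\neq b$, not that the images $F\mathcal{C}_a$ are mutually orthogonal, but your subsequent reasoning only uses the former, so the argument stands.
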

\begin{proof}
Suppose that (1) and (2) hold. If the weight of an error $E$ on the system is less than $d$, then (1) implies that the hybrid code can detect an error on the quantum information of weight less than $d$, following directly from the Knill-Laflamme conditions for quantum codes \cite{Knill1997}. Additionally, (2) implies that the image of the codes under all the errors of weight less than $c$ are all mutually orthogonal, that is, for $a\neq b$, $P_{a}\perp\left\langle EP_{b}:\wt\!\left(E\right)<c\right\rangle$. This means that by applying a measurement based on our projectors $P_{a}$ we can always detect an error to the classical information. If instead an error $F$ with $d\leq\wt\!\left(F\right)<c$ affects the system, then we can no longer detect the error on the quantum information, but since $P_{a}FP_{b}=0$, $a\neq b$, still holds for the error $F$, we can perform a measurement and detect an error to the classical information.

Now suppose that either (1) or (2) fails to hold. If (1) fails to hold, then the Knill-Laflamme conditions tell us that there is an error to the quantum information of weight less than $d$ that the code cannot detect. If (2) fails to hold, then there is an error $F$ of weight less than $c$ such that for some $a\neq b$, $P_{a}$ and $FP_{b}$ will not be orthogonal, meaning we will not be able to completely distinguish between the two of them.
\end{proof}

As with quantum codes, an error-correction variant of the conditions immediately follows.

\begin{corollary}\label{hybridknilllaflammecor}
An $\left(\!\left(n,K\!:\!M,d\!:\!c\right)\!\right)_{q}$ hybrid code with $d\leq c$ can correct up to $\left\lfloor\frac{d-1}{2}\right\rfloor$ errors to the quantum information and up to $\left\lfloor\frac{c-1}{2}\right\rfloor$ errors to the classical information if and only if
\begin{enumerate}
\item $P_{a}E^{\dagger}FP_{a}=\lambda_{E,F,a}P_{a}$, for all $a\in\left[M\right]$ and all error operators $E,F$ such that $\wt\!\left(E\right),\wt\!\left(F\right)\leq\left\lfloor\frac{d-1}{2}\right\rfloor$, and 
\item $P_{a}E^{\dagger}FP_{b}=0$, for all $a,b\in\left[M\right]$, $a\neq b$, and all error operators $E,F$ such that $wt\!\left(E\right),\wt\!\left(F\right)\leq\left\lfloor\frac{c-1}{2}\right\rfloor$.
\end{enumerate}
\end{corollary}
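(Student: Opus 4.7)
The plan is to reduce the corollary to Theorem~\ref{hybridknilllaflamme} using the standard observation that correcting errors of weight up to $t$ is equivalent to detecting products $E^\dagger F$ with $\wt(E),\wt(F)\leq t$, since $\wt(E^\dagger F)\leq\wt(E)+\wt(F)\leq 2t$. Setting $t_d=\lfloor(d-1)/2\rfloor$ and $t_c=\lfloor(c-1)/2\rfloor$, we have $2t_d<d$ and $2t_c<c$, so condition (1) of the corollary is exactly condition (1) of Theorem~\ref{hybridknilllaflamme} specialized to operators of the form $E^\dagger F$, and similarly condition (2) of the corollary is condition (2) of the theorem applied to such products.

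For the \emph{if} direction, I would first note that condition (1), holding for each $a$ individually, is the usual Knill--Laflamme error-correction criterion for the quantum code $\mathcal{C}_a$, so there exists a recovery map $\mathcal{R}_a$ that corrects every error of weight at most $t_d$ within $\mathcal{C}_a$. Condition (2) guarantees that the subspaces $E\mathcal{C}_a$ and $F\mathcal{C}_b$ are mutually orthogonal whenever $a\neq b$ and $\wt(E),\wt(F)\leq t_c$, so a projective measurement onto the direct sum $\bigoplus_{E} E\mathcal{C}_a$ (over a suitable basis of weight-$t_c$ errors) reveals the classical label $a$ together with a syndrome $E$; applying the corresponding $\mathcal{R}_a$ then completes the joint correction. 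For the \emph{only if} direction, I would run the standard Knill--Laflamme polar-decomposition argument within each $\mathcal{C}_a$ to extract condition (1) from the existence of $\mathcal{R}_a$, and would argue that any violation of (2) produces distinct classical messages $a,b$ together with errors $E,F$ of weight at most $t_c$ whose images $E\mathcal{C}_a$ and $F\mathcal{C}_b$ overlap, precluding reliable discrimination.

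The main obstacle is the modest bookkeeping required to show that the classical-identifying measurement and the per-inner-code quantum recovery compose into a single coherent physical operation rather than collapsing or scrambling the encoded state. Condition (2) is precisely what makes this work: the projector onto $\bigoplus_{E} E\mathcal{C}_a$ acts as the identity on the part of the state that originated from $\mathcal{C}_a$ and annihilates all other classical components, so $\mathcal{R}_a$ can be applied to the post-measurement state without interference. Once this is verified, the corollary follows from Theorem~\ref{hybridknilllaflamme} by the same two-step forward/reverse structure used for its detection counterpart.
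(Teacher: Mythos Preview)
Your proposal is correct and aligns with the paper's approach: the paper offers no explicit proof, stating only that ``as with quantum codes, an error-correction variant of the conditions immediately follows,'' which is precisely the standard reduction you invoke (replace the single error $E$ in Theorem~\ref{hybridknilllaflamme} by the product $E^{\dagger}F$ and use $\wt(E^{\dagger}F)\leq\wt(E)+\wt(F)$). Your discussion of assembling the classical-identifying measurement and the per-inner-code recovery into a single operation is more detailed than anything the paper provides, but it is consistent with, rather than different from, the intended argument.
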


Note that when $c<d$ there is a potential problem. Consider the following: let $\mathcal{C}_{a}$ and $\mathcal{C}_{b}$ be $1$-dimensional inner codes in a hybrid code with $P_{i}=\ket{\psi_{i}}\bra{\psi_{i}}$ the projector onto $\mathcal{C}_{i}$, and suppose the code satisfies conditions (1) and (2) in Theorem \ref{hybridknilllaflamme}. We may still have an error $E$ with $c\leq\wt\!\left(E\right)<d$ such that $\bra{\psi_{a}}E\ket{\psi_{b}}=\alpha\neq0$. Setting up a measurement and supposing that $\ket{\psi_{b}}$ is sent, we get \begin{align*} \left(P_{a}+P_{b}\right)E\ket{\psi_{b}} & = \ket{\psi_{a}}\bra{\psi_{a}}E\ket{\psi_{b}}+\ket{\psi_{b}}\bra{\psi_{b}}E\ket{\psi_{b}} \\ & = \alpha\ket{\psi_{a}}+\lambda_{E,b}\ket{\psi_{b}},\end{align*} which is a superposition of encoded states from the two inner codes. However, as we will show here and in Section \ref{hybridsubsystem}, we can still construct hybrid codes with $c<d$ by encoding the quantum and classical information using a subsystem structure on the encoding subspace.

Given a code $\mathcal{C}$ with a subsystem structure $A\otimes B$ on it, let $\left\{\ket{\varphi_{i}}\right\vert i\in \left[K\right]\}$ and $\left\{\ket{v_{i}}\mid i\in\left[M\right]\right\}$ be orthonormal bases for $A$ and $B$ respectively. We define the operators $$P_{a,b}=\left(\sum\limits_{i=1}^{K}\ket{\varphi_{i}}\bra{\varphi_{i}}\right)\otimes\ket{v_{a}}\bra{v_{b}},$$ which allows us to write the following error-detection conditions similar to those for subsystem codes \cite{Nielsen2007}.

\begin{theorem}\label{hysubknilllaflamme}
Let $\mathcal{C}$ be an $\left(\!\left(n,K\!:\!M,d\!:\!c\right)\!\right)_{q}$ hybrid code with a subsystem structure $A\otimes B$ on it, with $\left\{\ket{\varphi_{i}}\mid i\in\left[K\right]\right\}$ and $\left\{\ket{v_{i}}\mid i\in\left[M\right]\right\}$ as orthonormal bases for $A$ and $B$ respectively. Let $P_{a}=P_{a,a}$ be the projector onto the inner code $\mathcal{C}_{a}$. Then $\mathcal{C}$ can detect up to $d-1$ errors to the quantum information and up to $c-1$ errors to the classical information if and only if
\begin{enumerate}
\item $P_{a}EP_{b}=\lambda_{E,a,b}P_{a,b}$, for all $a,b\in\left[M\right]$ and all $E$ such that $\wt\!\left(E\right)<d$, and 
\item $P_{a}FP_{b}=0$, where $a\neq b$, for all $a,b\in\left[M\right]$, $a\neq b$, and all $F$ such that $\wt\!\left(F\right)<c$.
\end{enumerate}
\end{theorem}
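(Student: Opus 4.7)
The plan is to adapt the proof of Theorem~\ref{hybridknilllaflamme} to the subsystem setting by exploiting the tensor product structure $A\otimes B$ on the code space. The essential new feature is that an error operator $E$ is now allowed to map between distinct inner codes $\mathcal{C}_{b}$ and $\mathcal{C}_{a}$ in a controlled way, namely only by swapping the gauge basis vector $\ket{v_{b}}$ for $\ket{v_{a}}$ while leaving the quantum register in $A$ intact. The operator $P_{a,b}$ encodes precisely this action, since for an encoded state $\ket{\varphi}\otimes\ket{v_{b}}\in\mathcal{C}_{b}$ one has $P_{a,b}\!\left(\ket{\varphi}\otimes\ket{v_{b}}\right)=\ket{\varphi}\otimes\ket{v_{a}}$, so the cross terms allowed by condition~(1) do not disturb the quantum information stored in $A$.

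For the forward direction, assume (1) and (2). To handle the quantum information, fix any $E$ with $\wt\!\left(E\right)<d$ and any encoded state $\ket{\psi}=\ket{\varphi}\otimes\ket{v_{b}}\in\mathcal{C}_{b}$. Decomposing $E\ket{\psi}=\sum_{a}P_{a}E\ket{\psi}+R$ with $R$ orthogonal to the outer code, condition~(1) gives $P_{a}E\ket{\psi}=\lambda_{E,a,b}\ket{\varphi}\otimes\ket{v_{a}}$, so every surviving component preserves the $A$ factor $\ket{\varphi}$ and the quantum information can be recovered by gauge fixing. For the classical information, condition~(2) ensures that whenever $a\neq b$ and $\wt\!\left(F\right)<c$ the image $F\mathcal{C}_{b}$ is orthogonal to $\mathcal{C}_{a}$, so a projective measurement using $\left\{P_{a}\right\}$ identifies the inner code that was used. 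As in Theorem~\ref{hybridknilllaflamme}, the intermediate regime $d\leq\wt\!\left(F\right)<c$ is covered by (2) alone: the classical message is still recoverable even when the quantum information is no longer guaranteed.

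For the reverse direction I would argue the contrapositive. If (2) fails there exist $a\neq b$ and $F$ with $\wt\!\left(F\right)<c$ such that $P_{a}FP_{b}\neq 0$, and then some encoded state in $\mathcal{C}_{b}$ is mapped by $F$ to a vector with nonzero overlap in $\mathcal{C}_{a}$, making classical decoding ambiguous. The main obstacle is the case where (1) fails: unlike the stabilizer version in Theorem~\ref{hybridknilllaflamme} one cannot simply require $P_{a}EP_{b}=0$, so one must show that any deviation of $P_{a}EP_{b}$ from a scalar multiple of $P_{a,b}$ acts nontrivially on the $A$ subsystem and hence spoils error detection. The approach I would take is to mimic the derivation of the subsystem Knill--Laflamme conditions from~\cite{Nielsen2007} applied to each pair $(a,b)$: expanding $P_{a}EP_{b}=\sum_{i,j}M^{ab}_{ij}\ket{\varphi_{i}}\bra{\varphi_{j}}\otimes\ket{v_{a}}\bra{v_{b}}$ in the $A$-basis, one argues that if the matrix $M^{ab}$ is not a scalar multiple of the identity then there are two codewords in $\mathcal{C}_{b}$ on which no recovery channel can act consistently, contradicting error detectability.
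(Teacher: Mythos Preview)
Your proposal is correct and follows essentially the same approach as the paper: both exploit the tensor structure to show that condition~(1) forces each $P_{a}EP_{b}$ to act only on the $B$ factor (so the quantum state in $A$ survives even when $c\leq\wt(E)<d$), use condition~(2) for classical distinguishability, and invoke the subsystem Knill--Laflamme conditions of~\cite{Nielsen2007} for the converse. The only cosmetic difference is that the paper organizes the forward direction by weight ranges (emphasizing the new $c<d$ regime explicitly) while you organize it by which type of information is being protected; your matrix expansion $M^{ab}$ for the failure of~(1) is exactly the subsystem Knill--Laflamme argument the paper appeals to by name.
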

\begin{proof}
The proof is the same as the proof of Theorem \ref{hybridknilllaflamme}, except we must now check the case when $c<d$. Here we will first project the code onto the subspace $\mathcal{C}$ using the projector $$P=\sum\limits_{j\in\left[M\right]}P_{j}=\left(\sum\limits_{i\in\left[K\right]}\ket{\varphi_{i}}\bra{\varphi_{i}}\right)\otimes\left(\sum\limits_{j\in\left[M\right]}\ket{v_{j}}\bra{v_{j}}\right),$$ measure the classical information in subsystem $B$ in the $\left\{\ket{v_{a}}\right\}$ basis to determine which code was sent, and then use the recovery procedure associated with that code.

Suppose that (1) and (2) are true, and let $\ket{\varphi_{a}}=\left(\ket{\varphi}\otimes\ket{v_{a}}\right)$ be the encoded state and $E$ the error on the encoded state. If $\wt\!\left(E\right)< c$, then \begin{align*}PE\ket{\varphi_{a}} & = \sum\limits_{j\in\left[M\right]}P_{j}EP_{a}\ket{\varphi_{a}} \\ & = P_{a}EP_{a}\ket{\varphi_{a}},\end{align*} by condition (2). It follows from condition (1) that \begin{align*} PE\ket{\varphi_{a}} & =  P_{a}EP_{a}\ket{\varphi_{a}} \\ & \lambda_{E,a}\ket{\varphi_{a}}.\end{align*} Performing a measurement on the subsystem $B$ will not have an effect on the encoded state and it will inform us of which code was used.

If $c\leq\wt\!\left(E\right)<d$, then by condition (1) we have \begin{align*}PE\ket{\varphi_{a}} & = \sum\limits_{j\in\left[M\right]}P_{j}EP_{a}\ket{\varphi_{a}} \\ & = \sum\limits_{j\in\left[M\right]}\lambda_{E,j,a}P_{j,a}\ket{\varphi_{a}} \\ & = \ket{\varphi}\otimes\sum\limits_{j\in\left[M\right]}\lambda_{E,j,a}\ket{v_{j}}.\end{align*} Measuring the subsystem $B$ in the $\left\{\ket{v_{i}}\right\}$ basis, we get $\lambda_{E,x,a}\left(\ket{\varphi}\otimes\ket{v_{x}}\right)/\left\lvert\lambda_{E,x,a}\right\rvert$, where $x$ may not be the original classical message. However, we are still able to detect an error to the quantum information.

The converse follows the same logic as the proof of Theorem \ref{hybridknilllaflamme}, making use of the subsystem variant of the Knill-Laflamme conditions.
\end{proof}

Similar to Theorem \ref{hybridknilllaflamme}, we immediately get the error-correction variant of Theorem \ref{hysubknilllaflamme}.

\begin{corollary}\label{hysubknilllaflammecor}
Let $\mathcal{C}$ be an $\left(\!\left(n,K\!:\!M,d\!:\!c\right)\!\right)_{q}$ hybrid code with a subsystem structure $A\otimes B$ on it, with $\left\{\ket{\varphi_{i}}\mid i\in\left[K\right]\right\}$ and $\left\{\ket{v_{i}}\mid i\in\left[M\right]\right\}$ as orthonormal bases for $A$ and $B$ respectively. Let $P_{a}=P_{a,a}$ be the projector onto the inner code $\mathcal{C}_{a}$. Then $\mathcal{C}$ can correct up to $\left\lfloor\frac{d-1}{2}\right\rfloor$ errors to the quantum information and up to $\left\lfloor\frac{c-1}{2}\right\rfloor$ errors to the classical information if and only if
\begin{enumerate}
\item $P_{a}E^{\dagger}FP_{b}=\lambda_{E,F,a,b}P_{a,b}$, for all $a,b\in\left[M\right]$ and all $E,F$ such that $\wt\!\left(E\right),\wt\!\left(F\right)\leq\left\lfloor\frac{d-1}{2}\right\rfloor$, and 
\item $P_{a}E^{\dagger}FP_{b}=0$, for all $a,b\in\left[M\right]$, $a\neq b$, and all $E,F$ such that $\wt\!\left(E\right),\wt\!\left(F\right)\leq\left\lfloor\frac{c-1}{2}\right\rfloor$.
\end{enumerate}
\end{corollary}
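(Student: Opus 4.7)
The plan is to reduce Corollary~\ref{hysubknilllaflammecor} to Theorem~\ref{hysubknilllaflamme} via the standard weight-doubling trick that converts detection conditions into correction conditions. The key observation is that for any $E, F \in G_n$, the product $E^\dagger F$ again lies in $G_n$ and satisfies $\wt(E^\dagger F) \le \wt(E) + \wt(F)$, since a tensor factor that is scalar in both $E$ and $F$ stays scalar in the product. Consequently $\wt(E), \wt(F) \le \lfloor (d-1)/2 \rfloor$ implies $\wt(E^\dagger F) < d$, and analogously $\wt(E), \wt(F) \le \lfloor (c-1)/2 \rfloor$ implies $\wt(E^\dagger F) < c$.

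For the ``only if'' direction, assume the code corrects the stated number of quantum and classical errors. Then, in particular, it must detect each product $E^\dagger F$ of appropriately bounded weight, so Theorem~\ref{hysubknilllaflamme} applied to the single operator $E^\dagger F$ yields the two conditions of the corollary directly, after identifying the scalar $\lambda_{E^\dagger F, a, b}$ with $\lambda_{E, F, a, b}$. For the ``if'' direction, assume conditions (1) and (2) of the corollary and reuse the recovery procedure from the proof of Theorem~\ref{hysubknilllaflamme}: project onto $\mathcal{C}$ via $P = \sum_{j \in [M]} P_j$, measure subsystem $B$ in the $\{\ket{v_a}\}$ basis to extract a classical label, and finally apply the inner-code quantum recovery on $A$. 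Condition (2), applied to pairs of weight at most $\lfloor (c-1)/2 \rfloor$, makes the images of distinct inner codes mutually orthogonal under all such errors, so the $B$-measurement returns the true classical label whenever the error weight is within the classical correction radius. Condition (1), applied to pairs of weight at most $\lfloor (d-1)/2 \rfloor$, is the subsystem Knill--Laflamme condition for each inner code, and the presence of $P_{a,b}$ on the right-hand side absorbs the possibly unknown action on $B$ exactly as in the proof of Theorem~\ref{hysubknilllaflamme}.

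No genuinely new ideas are required; the argument is essentially bookkeeping of two weight thresholds in parallel. The only point of mild care is that condition (1) of the corollary allows $a \ne b$, so that when the classical measurement returns a label $x \ne a$ in the intermediate regime $\lfloor (d-1)/2 \rfloor < \wt(E) \le \lfloor (c-1)/2 \rfloor$, the residual action on subsystem $A$ is still guaranteed to be a scalar and the quantum information survives. This is the same phenomenon handled in the second half of the proof of Theorem~\ref{hysubknilllaflamme}, so I expect the final write-up to be a short paragraph that simply invokes that earlier proof with the substitution $E \mapsto E^\dagger F$.
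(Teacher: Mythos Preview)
Your approach is correct and matches the paper's, which does not give a separate proof but simply states that the corollary follows immediately from Theorem~\ref{hysubknilllaflamme} in the same way Corollary~\ref{hybridknilllaflammecor} follows from Theorem~\ref{hybridknilllaflamme}; the weight-doubling substitution $E\mapsto E^{\dagger}F$ you describe is exactly that standard reduction. One small slip: the intermediate regime you mention should read $\lfloor (c-1)/2\rfloor < \wt(E) \le \lfloor (d-1)/2\rfloor$ (the $c<d$ case), which is the regime where the classical label may be lost but the subsystem structure via condition~(1) with $a\neq b$ still protects the quantum information---your verbal description is correct, only the inequality is reversed.
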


We leave the cases where errors to either the quantum or classical information are corrected while errors to the other are only detected for future research.

\subsection{Genuine Hybrid Codes}

Constructing hybrid codes from quantum codes is not a particularly difficult task. When $d=c$, Grassl et al. \cite{Grassl2017} gave several simple constructions of hybrid codes from quantum codes:

\begin{proposition}[Grassl et al.\cite{Grassl2017}]\label{trivhybrid}
Hybrid codes can be constructed using the following ``trivial" constructions:
\begin{enumerate}
\item Given an $\left(\!\left(n,KM,d\right)\!\right)_{q}$ quantum code of composite dimension $KM$, there exists a hybrid code with parameters $\left(\!\left(n,K\!:\!M,d\right)\!\right)_{q}$.
\item Given an $\left[\!\left[n,k\!:\!m,d\right]\!\right]_{q}$ hybrid code with $k>0$, there exists a hybrid code with parameters $\left[\!\left[n,k-1\!:\!m+1,d\right]\!\right]_{q}$.
\item Given an $\left[\!\left[n_{1},k_{1},d\right]\!\right]_{q}$ quantum code and an $\left[n_{2},m_{2},d\right]_{q}$ classical code, there exists a hybrid code with parameters $\left[\!\left[n_{1}+n_{2},k_{1}\!:\!m_{2},d\right]\!\right]_{q}$.
\end{enumerate}
\end{proposition}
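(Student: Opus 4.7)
My plan is to verify each of the three constructions against the hybrid Knill-Laflamme conditions of Theorem~\ref{hybridknilllaflamme}, specialized to $c=d$. For (1), I would pick any orthonormal basis of the $((n,KM,d))_q$ quantum code and partition it into $M$ blocks of size $K$, letting $\mathcal{C}_m$ be the span of the $m$-th block and $P_m$ its projector. The Knill-Laflamme conditions for the ambient code give $\bra{v_i}E\ket{v_j}=\lambda_E\delta_{ij}$ for any two basis vectors and any error of weight less than $d$. Regrouping this matrix into $M\times M$ blocks immediately yields $P_aEP_a=\lambda_EP_a$ (condition (1)) and $P_aEP_b=0$ for $a\neq b$ (condition (2)), so this step is pure bookkeeping.

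For (2), given a hybrid code with inner codes $\mathcal{C}_m$ of dimension $q^k$, I would decompose each $\mathcal{C}_m$ into $q$ orthogonal subcodes $\mathcal{C}_{m,j}$ of dimension $q^{k-1}$, for instance by taking the $q$ eigenspaces of a logical $\overline{Z}$ on a chosen logical qudit; this produces $qM$ inner codes, matching the target parameters $[[n,k-1:m+1,d]]_q$. The refined projectors satisfy $P_{m,j}P_m=P_{m,j}$, so condition (1) for the refined code follows by sandwiching the original relation $P_mEP_m=\lambda_{E,m}P_m$ between $P_{m,j}$'s. Condition (2) between different values of $m$ is inherited from the original code; between different $j$'s at the same $m$, it follows because $P_mEP_m=\lambda_{E,m}P_m$ implies $\bra{\phi}E\ket{\psi}=0$ for any two orthogonal vectors $\ket{\phi},\ket{\psi}\in\mathcal{C}_m$, which forces $P_{m,j}EP_{m,j'}=0$ for $j\neq j'$.

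For (3), I would take the tensor-product construction $\mathcal{C}_m=\mathcal{C}_Q\otimes\mathrm{span}\{\ket{c_m}\}$, where $\mathcal{C}_Q$ is the given quantum code and $\{c_m\}$ are the codewords of the classical code, so that $P_m=P_Q\otimes\ket{c_m}\bra{c_m}$. Any Pauli error factors as $E=E_1\otimes E_2$ with $\wt(E_1)+\wt(E_2)<d$. Writing $E_2=X(\mathbf{a})Z(\mathbf{b})$, the classical matrix element $\bra{c_a}E_2\ket{c_b}$ vanishes unless $\mathbf{a}=c_a-c_b$, which would force $\wt(E_2)\geq d$ when $a\neq b$ by the classical minimum distance, contradicting the weight bound; this gives condition (2). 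For condition (1), the diagonal classical element contributes a scalar depending only on $a$, and the quantum factor is handled by the Knill-Laflamme conditions for $\mathcal{C}_Q$. The main obstacle is not conceptual but rather tracking how the weight budget splits across the two tensor factors in (3); the other two constructions are essentially tautological once the correct decomposition is in hand.
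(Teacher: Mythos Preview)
The paper does not actually give a proof of this proposition; it is stated as a result of Grassl et al.\ and left unproved, so there is no in-paper argument to compare against. Your verification via the hybrid Knill--Laflamme conditions is correct in all three parts and is the natural route: part (1) is exactly the observation that $PEP=\lambda_E P$ restricts to any orthogonal direct-sum decomposition of the code space; part (2) is the same observation applied within each inner code, and your use of $\overline{Z}$-eigenspaces also preserves the stabilizer structure required by the $[\![\cdot]\!]$ notation; part (3) matches the tensor-product construction the paper itself uses (without details) in the proof of Proposition~\ref{badhybrid}. One small remark on (3): you should note, as you implicitly do, that the error-detection conditions need only be checked on elements of the nice error basis $\mathcal{E}_{n_1+n_2}$, which genuinely factor as $E_1\otimes E_2$ with additive weight; for a general error operator this factorization would fail, but linearity of the conditions in $E$ makes the basis check sufficient.
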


We call a hybrid code with $d=c$ \emph{genuine} if there is no code constructable using Proposition \ref{trivhybrid} with the same parameters. Grassl et al. \cite{Grassl2017} showed the first examples of genuine hybrid codes, constructing multiple small-parametered hybrid codes, while the authors constructed several infinite families of genuine hybrid stabilizer codes using stabilizer pasting \cite{Nemec2019}. Note that by calling such codes ``genuine", we do not mean to imply that the hybrid codes constructed using the approaches of Proposition \ref{trivhybrid} are in any sense ``fake". Hybrid codes constructed using one of these three methods are in a sense wasting a quantum resource, in that they are transmitting classical information using space that could have been used to transmit quantum information.

Similar to the case where there is a single minimum distance, we can construct trivial hybrid codes with two minimum distances using the following construction that generalizes the third construction of Proposition \ref{trivhybrid}:
\begin{proposition}\label{badhybrid}
Given an $\left(\!\left(n_{1},K_{1},d\right)\!\right)_{q}$ quantum code and an $\left(n_{2},M_{2},c\right)_{q}$ classical code, there exists a hybrid code with parameters $\left(\!\left(n_{1}+n_{2},K_{1}\!:\!M_{2},d\!:\! c\right)\!\right)_{q}$.
\end{proposition}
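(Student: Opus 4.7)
\medskip\noindent\textbf{Proof proposal.} The plan is to paste the classical code onto the quantum code in the most direct way. Let $\mathcal{C}_{Q}$ denote the given $\left(\!\left(n_{1},K_{1},d\right)\!\right)_{q}$ quantum code with projector $P_{Q}$, and enumerate the $M_{2}$ codewords of the classical code as $w_{1},\dots,w_{M_{2}}\in\mathbb{F}_{q}^{n_{2}}$. I would declare the inner code for the classical message $m\in\left[M_{2}\right]$ to be $\mathcal{C}_{m}=\mathcal{C}_{Q}\otimes\ket{w_{m}}$, with projector $P_{m}=P_{Q}\otimes\ket{w_{m}}\bra{w_{m}}$, so that the outer code $\mathcal{C}=\left\{\mathcal{C}_{m}\mid m\in\left[M_{2}\right]\right\}$ consists of $M_{2}$ mutually orthogonal inner codes of quantum dimension $K_{1}$ inside $\mathbb{C}^{q^{n_{1}+n_{2}}}$. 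The remaining work is to verify the two conditions of Theorem~\ref{hybridknilllaflamme} on the nice error basis $\mathcal{E}_{n_{1}+n_{2}}$.

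Next I would exploit the fact that any nice-basis error on $n_{1}+n_{2}$ qudits factors as $E=E_{Q}\otimes E_{C}$, where $E_{Q}$ acts on the first $n_{1}$ coordinates and $E_{C}$ on the last $n_{2}$, and that the projectors $P_{m}$ factor the same way. For condition~(1), whenever $\wt\!\left(E\right)<d$ we have $\wt\!\left(E_{Q}\right)\leq\wt\!\left(E\right)<d$, so the Knill--Laflamme conditions for $\mathcal{C}_{Q}$ give $P_{Q}E_{Q}P_{Q}=\lambda_{E_{Q}}P_{Q}$; the classical-side factor $\bra{w_{m}}E_{C}\ket{w_{m}}$ is just a scalar in $\mathbb{C}$, so the product is the required scalar multiple of $P_{m}$.

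For condition~(2), the key step is a distance argument on the classical side. Writing $F_{C}=X\!\left(\mathbf{u}\right)Z\!\left(\mathbf{v}\right)$, the matrix element $\bra{w_{a}}F_{C}\ket{w_{b}}$ vanishes unless $\mathbf{u}=w_{a}-w_{b}$. For $a\neq b$, the vector $w_{a}-w_{b}$ has Hamming weight at least $c$ by the classical minimum distance, so a nonzero overlap would force $\wt\!\left(F\right)\geq\wt\!\left(F_{C}\right)\geq\wt\!\left(\mathbf{u}\right)\geq c$, contradicting the assumption $\wt\!\left(F\right)<c$; hence $P_{a}FP_{b}=0$.

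I do not expect any genuine obstacle here: the argument is essentially a tensor-product computation glued together by the Knill--Laflamme conditions for $\mathcal{C}_{Q}$ and the Hamming distance of the classical code. The only bookkeeping point is that the Pauli weight of an error upper-bounds the Hamming weight of each of its $X$- and $Z$-parts separately, which is exactly why the classical minimum distance feeds cleanly into condition~(2), and this mild routineness is consistent with the proposition being a ``trivial'' construction in the sense of Proposition~\ref{trivhybrid}.
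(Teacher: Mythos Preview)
Your approach is exactly the paper's---tensor the quantum code on the first $n_{1}$ qudits with the classical codeword $\ket{w_{m}}$ on the last $n_{2}$---and your verification simply fills in detail that the paper's one-sentence proof omits. One small caveat: Theorem~\ref{hybridknilllaflamme} is stated only for $d\le c$, so to cover the case $c<d$ you should invoke Theorem~\ref{hysubknilllaflamme} instead; your construction already carries the required subsystem structure $A\otimes B$, and condition~(1) there for $a\neq b$ follows from the identical computation, since $P_{a}EP_{b}$ factors as $\lambda_{E_{Q}}P_{Q}\otimes\bra{w_{a}}E_{C}\ket{w_{b}}\,\ket{w_{a}}\bra{w_{b}}$, a scalar multiple of $P_{a,b}$.
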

\begin{proof}
Use the quantum code to encode the quantum information on the first $n_{1}$ physical qudits and use the classical code to encode the classical information on the remaining $n_{2}$ physical qudits.
\end{proof}

To generalize the first and second constructions to allow for two minimum distances, we will define a partial order on the parameters of hybrid codes to determine which codes have ``better" parameters than others:

\begin{definition}
Given two hybrid codes $\mathcal{C}$ and $\mathcal{C}'$ with parameters $\left(\!\left(n,K\!:\!M,d\!:\!c\right)\!\right)_{q}$ and $\left(\!\left(n,K'\!:\!M',d'\!:\!c'\right)\!\right)_{q}$ respectively, we say $\mathcal{C}\preceq\mathcal{C}'$ if $KM\leq K'M'$, $K\leq K'$, $d\leq d'$, and $c\leq c'$ are all true.
\end{definition}

Note that while we write $\mathcal{C}\preceq\mathcal{C}'$, we are only comparing the parameters of the codes and not the codes themselves.

\begin{proposition}
The relation $\preceq$ defines a partial order on the set of hybrid code parameters.
\end{proposition}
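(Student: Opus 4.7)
The plan is to verify the three defining properties of a partial order---reflexivity, transitivity, and antisymmetry---for the relation $\preceq$ on the set of length-$n$ hybrid parameter tuples $\left(\!\left(n,K\!:\!M,d\!:\!c\right)\!\right)_{q}$. Since every condition in the definition of $\preceq$ is an instance of the usual $\leq$ on nonnegative integers, each property will be inherited from the corresponding property of $\leq$ on $\mathbb{Z}_{\geq 0}$.

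Reflexivity is immediate: for any hybrid code $\mathcal{C}$ with parameters $\left(\!\left(n,K\!:\!M,d\!:\!c\right)\!\right)_{q}$, each of the four inequalities $KM\leq KM$, $K\leq K$, $d\leq d$, $c\leq c$ clearly holds, so $\mathcal{C}\preceq\mathcal{C}$. Transitivity is almost as easy: if $\mathcal{C}\preceq\mathcal{C}'$ and $\mathcal{C}'\preceq\mathcal{C}''$, then applying transitivity of $\leq$ separately to each of the four parameter-inequalities propagates the bounds from $\mathcal{C}$ through $\mathcal{C}'$ to $\mathcal{C}''$, giving $\mathcal{C}\preceq\mathcal{C}''$.

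The only place where there is anything to think about is antisymmetry. Suppose $\mathcal{C}\preceq\mathcal{C}'$ and $\mathcal{C}'\preceq\mathcal{C}$ for codes with parameters $\left(\!\left(n,K\!:\!M,d\!:\!c\right)\!\right)_{q}$ and $\left(\!\left(n,K'\!:\!M',d'\!:\!c'\right)\!\right)_{q}$. Then antisymmetry of $\leq$ on integers applied to each of the four defining conditions gives $KM=K'M'$, $K=K'$, $d=d'$, and $c=c'$. The potentially subtle point is that the definition of $\preceq$ does not directly demand $M\leq M'$; however, from $K=K'$ (with $K,K'>0$) together with $KM=K'M'$ we immediately deduce $M=M'$, so the parameter tuples coincide. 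Since the proposition asserts a partial order on the \emph{set of hybrid code parameters} rather than on codes themselves, this equality of tuples is exactly what antisymmetry requires, completing the verification.

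The main (and only) obstacle is the observation above that $\preceq$ constrains the product $KM$ rather than $M$ itself, so one must be careful not to treat $M\leq M'$ as a hypothesis; fortunately antisymmetry still goes through because $K=K'>0$ lets us cancel. No other step involves anything beyond standard properties of $\leq$ on the integers.
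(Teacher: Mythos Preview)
Your proof is correct and follows the same approach as the paper's own argument, which simply notes that reflexivity, antisymmetry, and transitivity of $\preceq$ follow directly from $\leq$ being a partial order. Your additional care in the antisymmetry step---observing that $M=M'$ must be recovered from $KM=K'M'$ and $K=K'>0$ rather than assumed directly---is a worthwhile clarification that the paper omits.
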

\begin{proof}
The reflexivity, antisymmetry, and transitivity of $\preceq$ all follow directly from the fact that $\leq$ is a partial order.
\end{proof}

If $\mathcal{C}\preceq\mathcal{C}'$, we say that $\mathcal{C}'$ has at least as good parameters as $\mathcal{C}$. Intuitively, this covers the case when $\mathcal{C}'$ has at least one parameter greater than the corresponding parameter in $\mathcal{C}$, with all other parameters being equal. For example, an $\left[\!\left[8,3,3\right]\!\right]_{2}$ quantum code has better parameters than an $\left[\!\left[8,1,3\right]\!\right]_{2}$ quantum code, as the former can encode two more logical qubits than the latter. We also give preference to codes that can transmit more quantum information if the total amount of information that can be transmitted by each code is the same. For example, we can compare the $\left[\!\left[9,1\!:\!2,3\right]\!\right]_{2}$ hybrid code of Kremsky et al. \cite{Kremsky2008} with the $\left[\!\left[9,2\!:\!2,3\right]\!\right]_{2}$ code of Grassl et al. \cite{Grassl2017}, with the latter having better parameters since it can encode one logical qubit more than the former. Similarly, we can compare both of these codes with the $\left[\!\left[9,3\!:\!1,3\right]\!\right]_{2}$ code we construct in Example \ref{9ex}, which has better parameters than both, as we can use it to construct a $\left[\!\left[9,2\!:\!2,3\right]\!\right]_{2}$ by using one of the logical qubits to transmit a classical bit. However, we cannot compare any of these three codes with the $\left[\!\left[9,1\!:\!4,3\!:\!2\right]\!\right]_{2}$ code we construct in Example \ref{9bsex}, since it transmits more total information (has a larger sum $k+m$) but has a lower classical distance.

We call a hybrid code (with $K,M>1$, although the partial order is also defined on purely quantum and classical codes) genuine if it is a maximal element in the partially ordered set and has parameters that cannot be achieved by a code constructed using Proposition \ref{badhybrid}, and we call it a genuine hybrid stabilizer code if it satisfies the same conditions on the partially ordered set induced by $\preceq$ on the subset of hybrid stabilizer codes. Intuitively, this means that a genuine hybrid code is one in which any one parameter of the code cannot be improved without sacrificing some other parameter, with the exception that we can sacrifice one bit of classical information for one qudit of quantum information. When restricted to the case with $c=d$, we recover the original definition of genuine codes.

\subsection{Hybrid Stabilizer Codes}

For the remainder of the paper we will restrict our attention to hybrid stabilizer codes, which have a particularly nice structure. Starting with a quantum stabilizer code $\mathcal{C}_{0}$ with stabilizer group $\mathcal{S}_{0}$, we choose $M$ translation operators $t_{i}$ from different cosets of $N\!\left(\mathcal{S}_{0}\right)$ in $G_{n}$ in such a way that the cosets form a group (we will always take $t_{1}$ to be the identity). The hybrid code $\mathcal{C}$ is then the union of the translated codes: $$\mathcal{C}=\bigcup\limits_{i\in\left[M\right]}t_{i}\mathcal{C}_{0}$$ 

The stabilizer generators of the inner code $\mathcal{C}_{0}$ can be divided into a quantum stabilizer $\mathcal{S}_{\mathcal{Q}}$ and a classical stabilizer $\mathcal{S}_{\mathcal{C}}$ such that $\mathcal{S}_{0}=\left\langle\mathcal{S}_{\mathcal{Q}},\mathcal{S}_{\mathcal{C}}\right\rangle$ \cite{Kremsky2008}. The quantum stabilizer $\mathcal{S}_{\mathcal{Q}}$ is the stabilizer of the outer code $\mathcal{C}$ and is generated by those generators of $\mathcal{S}_{0}$ that commute with all of the translation operators $t_{i}$. 
The classical stabilizer $\mathcal{S}_{\mathcal{C}}$ is generated by the remaining stabilizer generators of $\mathcal{S}_{0}$, each of which does not commute with at least one translation operator. We can associate each of the $\ell m$ generators $g_{i}$ of $\mathcal{S}_{\mathcal{C}}$ with an operator $\overline{Z_{j}\!\left(\alpha^{i}\right)}$, for $i\in\left\{0,\dots,\ell-1\right\}$, $j\in\left[\left\lfloor m\right\rfloor\right]$, which acts on the $j$-th virtual qudit, as well as $\overline{Z_{\left\lceil m\right\rceil}\!\left(\alpha^{i}\right)}$ for $i\in\left\{0,\dots, \ell\left(m-\left\lfloor m\right\rfloor\right)-1\right\}$ if $m$ is not a power of $q$. Similarly, we can associate each of the generators of the translation operators $\overline{X_{j}\!\left(\alpha^{i}\right)}$ for $i\in\left\{0,\dots, \ell -1\right\}$, $j\in\left[\left\lfloor m\right\rfloor\right]$, as well as $\overline{X_{\left\lceil m\right\rceil}\!\left(\alpha^{i}\right)}$ for $i\in\left\{0,\dots, \ell\left(m-\left\lfloor m\right\rfloor\right)-1\right\}$ if $m$ is not a power of $q$. These operators satisfy the commutation relations from Section \ref{stab}, and we can associate each classical message $\mathbf{a}\in\mathbb{F}_{q}^{\left\lceil m\right\rceil}$ with the translation operator $t_{\mathbf{a}}=\overline{X\!\left(\mathbf{a}\right)}=\overline{X_{1}\!\left(a_{1}\right)}\cdot\overline{X_{2}\!\left(a_{2}\right)}\cdots\overline{X_{\left\lceil m\right\rceil}\!\left(a_{\left\lceil m \right\rceil}\right)}$. In addition to mapping between the inner codes, these translation operators are also logical operators for the outer code $\mathcal{C}$. 

The quantum and classical stabilizers are sufficient to fully define a hybrid code. The following result was originally given in the binary case by Kremsky et al. \cite{Kremsky2008} and by the authors in the case of prime fields \cite{Nemec2019}. Here we generalize it to arbitrary finite fields.

\begin{theorem} \label{phaseconstruction}
Let $\mathcal{C}$ be an $\left[\!\left[n,k\!:\!m,d\!:\!c\right]\!\right]_{q}$ hybrid stabilizer code over a finite field of characteristic $p$, where $q=p^{\ell}$, with quantum stabilizer $\mathcal{S}_{\mathcal{Q}}$ and classical stabilizer $\mathcal{S}_{\mathcal{C}}=\left\langle g_{i}\mid i\in\left[\ell m\right]\right\rangle$, where $g_{i}=\overline{Z\!\left(\mathbf{b}_{i}\right)}$, $\mathbf{b}_{i}\in\mathbb{F}_{q}^{\left\lceil m \right\rceil}$. Then the inner stabilizer code $t_{\mathbf{a}}\mathcal{C}_{0}$ associated with the classical message $\mathbf{a}\in\mathbb{F}_{q}^{\left\lceil m\right\rceil}$ is stabilized by $$\left\langle\mathcal{S}_{\mathcal{Q}},\omega^{-\Tr\left(\mathbf{b}_{i}\cdot\mathbf{a}\right)}g_{i}\mid i\in\left[\ell m\right]\right\rangle,$$ where $\omega$ is a primitive $p$-th root of unity.
\end{theorem}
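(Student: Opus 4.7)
The plan is to lift the stabilizer of $\mathcal{C}_0$ to the translate $t_{\mathbf{a}}\mathcal{C}_0$ by conjugating by $t_{\mathbf{a}}$, and then read off what the conjugation does to each generator. First I would record the general fact that if $S\ket{\psi} = \ket{\psi}$ for every $\ket{\psi}\in\mathcal{C}_0$, then $(t_{\mathbf{a}} S t_{\mathbf{a}}^{-1})(t_{\mathbf{a}}\ket{\psi}) = t_{\mathbf{a}} S \ket{\psi} = t_{\mathbf{a}}\ket{\psi}$, so the translated code is stabilized by the conjugated group $t_{\mathbf{a}}\mathcal{S}_0 t_{\mathbf{a}}^{-1}$. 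It therefore suffices to compute $t_{\mathbf{a}} g\, t_{\mathbf{a}}^{-1}$ for each generator $g$ of $\mathcal{S}_0 = \langle \mathcal{S}_{\mathcal{Q}}, \mathcal{S}_{\mathcal{C}}\rangle$.

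I would then split the computation into the quantum and classical parts. For every $S\in\mathcal{S}_{\mathcal{Q}}$, the characterization given just before the theorem says that $S$ commutes with every translation operator, so $t_{\mathbf{a}} S t_{\mathbf{a}}^{-1} = S$ and these generators are unchanged. For the classical generators $g_i = \overline{Z(\mathbf{b}_i)}$, I would invoke the logical commutation relation recorded in Section~\ref{stab}, \[ \overline{X(\mathbf{a})}\;\overline{Z(\mathbf{b}_i)} = \omega^{\Tr(-\mathbf{b}_i\cdot\mathbf{a})}\,\overline{Z(\mathbf{b}_i)}\;\overline{X(\mathbf{a})}, \] which rearranges to $t_{\mathbf{a}} g_i t_{\mathbf{a}}^{-1} = \omega^{-\Tr(\mathbf{b}_i\cdot\mathbf{a})}\, g_i$. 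Assembling both pieces gives exactly the generating set $\langle \mathcal{S}_{\mathcal{Q}}, \omega^{-\Tr(\mathbf{b}_i\cdot\mathbf{a})} g_i\mid i\in[\ell m]\rangle$ announced in the theorem.

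The main obstacle is not the algebra but the bookkeeping of the ``virtual qudit'' labeling: I need the labels $\mathbf{b}_i\in\mathbb{F}_q^{\lceil m\rceil}$ on the classical generators and the label $\mathbf{a}\in\mathbb{F}_q^{\lceil m\rceil}$ on the translation operator to refer to the same logical coordinate system, so that the inner product $\mathbf{b}_i\cdot\mathbf{a}$ in the exponent means what the logical commutation relation says it means. The paragraph preceding the theorem arranges exactly this by associating the generators of $\mathcal{S}_{\mathcal{C}}$ with $\overline{Z_j(\alpha^i)}$ and the generators of the translation operators with $\overline{X_j(\alpha^i)}$ on the same virtual qudits; once that convention is in place, the generalized Pauli commutation relation feeds through verbatim and the phase $\omega^{-\Tr(\mathbf{b}_i\cdot\mathbf{a})}$ appears with the correct sign.
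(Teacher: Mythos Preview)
Your proposal is correct and follows essentially the same route as the paper's proof: both verify that the quantum stabilizers pass through unchanged because they commute with $t_{\mathbf{a}}$, and both use the logical commutation relation $\overline{X(\mathbf{a})}\,\overline{Z(\mathbf{b}_i)}=\omega^{-\Tr(\mathbf{b}_i\cdot\mathbf{a})}\overline{Z(\mathbf{b}_i)}\,\overline{X(\mathbf{a})}$ to show that $\omega^{-\Tr(\mathbf{b}_i\cdot\mathbf{a})}g_i$ fixes $t_{\mathbf{a}}\ket{\varphi}$. Your explicit framing via conjugation $t_{\mathbf{a}}\mathcal{S}_0 t_{\mathbf{a}}^{-1}$ and your discussion of the virtual-qudit labeling are more detailed than the paper's two-sentence argument, but the underlying computation is the same.
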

\begin{proof}
Let $\ket{\varphi}$ be an encoded state of $\mathcal{C}_{0}$, so that $t_{\mathbf{a}}\ket{\varphi}$ is an encoded state of $t_{\mathbf{a}}\mathcal{C}_{0}$. Since elements of the quantum stabilizer commute with $t_{\mathbf{a}}$ and stabilize $\ket{\varphi}$, they are all elements of the stabilizer of $t_{\mathbf{a}}\mathcal{C}_{0}$. In the case of $\omega^{-\Tr\left(\mathbf{b_{i}}\cdot\mathbf{a}\right)}g_{i}$, it follows from the commutation relations that $t_{\mathbf{a}}\ket{\varphi}$ is one of its $+1$-eigenstates, so it is also in the stabilizer of $t_{\mathbf{a}}\mathcal{C}_{0}$.
\end{proof}

\section{Hybrid Codes from Subsystem Codes}\label{hybridsubsystem}

In this section we show how every subsystem code leads to a hybrid code with the same quantum error-correcting properties. While the tensor structure of classical-quantum systems (see Devetak and Shor \cite{Devetak2005} and B{\'e}ny et al. \cite{Beny2007b}) suggests that subsystem codes might be useful in constructing hybrid codes, it is not immediately obvious whether or not they can protect the encoded classical information from errors. The main idea behind our construction is to follow the reasoning of Theorem \ref{hysubknilllaflamme} and encode the quantum information in the subsystem $A$ stabilized by the stabilizer group $\mathcal{S}$, and then use \emph{gauge fixing} to encode the classical information into the subsystem $B$.

\subsection{Gauge Fixing Construction}

Gauge fixing is a technique that takes commuting gauge operators of the subsystem code and uses them to generate a larger stabilizer group $\mathcal{S}_{0}$. In essence, we are taking a subset of the gauge qudits and fixing them to certain states. Since the states are fixed, no information can be encoded on those qudits, but any errors that occur on them is now either a pure error or in the stabilizer $\mathcal{S}_{0}$.

Gauge fixing is well known in quantum error-correction for its use in code switching \cite{Bombin2015, Paetznick2013}, which allows for a way around Eastin and Knill's famous no-go theorem in fault tolerance \cite{Eastin2009}. Our construction picks a commuting set of $\ell r$ independent gauge operators of the subsystem code, and then multiplies them by a phase, which forces the gauge qudits to change to a different fixed state. For example, in a binary subsystem code if the gauge operator $G_{i}^{Z}$ is fixed, it means that the $i$-th gauge qubit will be fixed as the $+1$ eigenstate of the operator, so we have a logical $\ket{0}$ that is fixed. If instead we fix the operator $-G_{i}^{Z}$, the $i$-th gauge qubit will be fixed as $\ket{1}$, the $-1$ eigenstate of the operator.

\begin{theorem}\label{hybridconstruction}
Let $\mathcal{C}$ be an $\left[\!\left[n,k,r,d\right]\!\right]_{q}$ subsystem code. Then there exists an $\left[\!\left[n,k\!:\!r,d\!:\!c\right]\!\right]_{q}$ hybrid code.
\end{theorem}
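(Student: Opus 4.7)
The plan is to apply gauge fixing to the subsystem code $\mathcal{C}$, turning the $Z$-type gauge operators into classical stabilizers and using the $X$-type gauge operators as translation operators. First I would fix a generating set for the gauge group $\mathcal{G} = \langle \omega, \mathcal{S}, G_i^X, G_i^Z \mid i \in [\ell r]\rangle$ of $\mathcal{C}$, where $\mathcal{S}$ is the stabilizer group of order $q^{n-k-r}$, and where $G_i^X$ and $G_j^Z$ commute if and only if $i \neq j$. I would then define a candidate hybrid stabilizer code by setting the quantum stabilizer to be $\mathcal{S}_\mathcal{Q} = \mathcal{S}$ and the classical stabilizer to be $\mathcal{S}_\mathcal{C} = \langle G_i^Z \mid i \in [\ell r]\rangle$, taking the $G_i^X$ to generate the $q^r$ translation operators indexing the classical messages.

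The inner codes arise directly from Theorem~\ref{phaseconstruction}: for each message $\mathbf{a}\in \mathbb{F}_q^r$, the translated inner code $t_{\mathbf{a}}\mathcal{C}_0$ is stabilized by $\langle \mathcal{S}_\mathcal{Q}, \omega^{-\Tr(\mathbf{b}_i \cdot \mathbf{a})} G_i^Z \mid i \in [\ell r]\rangle$, and each inner code encodes $k$ logical qudits since $|\mathcal{S}_\mathcal{Q}| + \ell r = \ell(n-k-r) + \ell r = \ell(n-k)$. Distinct cosets of translation operators yield orthogonal inner codes, and together they fill out a $q^{k+r}$-dimensional subspace of $\mathbb{C}^{q^n}$, producing the claimed $k:r$ split.

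For the quantum distance, I would show that the distance of each inner stabilizer code is at least $d$. Writing the inner stabilizer as $\mathcal{S}_0 = \langle \mathcal{S}, G_i^Z \rangle$, the key observation is that $\mathcal{G} \cap N(\mathcal{S}_0) = \mathcal{S}_0$: any element of $\mathcal{G}$ commuting with every $G_i^Z$ must lie in $\langle \mathcal{S}, G_i^Z\rangle$ by the anticommutation structure of the gauge operator pairs. Consequently, any nontrivial logical operator of the inner code lies in $N(\mathcal{S}) \setminus \mathcal{G}$, which by definition of a subsystem code has minimum weight at least $d$. The classical distance $c$ is then defined as the smallest weight of an element $E \in N(\mathcal{S}_\mathcal{Q})$ that anticommutes with at least one translation operator $t_{\mathbf{a}}$; such an element is precisely one that can confuse inner codes, and this quantity is a positive integer, guaranteeing some admissible value of $c$.

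Finally, I would verify the hybrid error-correction conditions of Theorem~\ref{hysubknilllaflamme} with the subsystem structure $A \otimes B$ already built into the construction (subsystem $A$ carrying the quantum information, subsystem $B$ identified with the gauge qudits that now carry classical information). The main obstacle is the quantum-distance argument: naively, one might worry that promoting gauge operators to stabilizers could shorten the code's effective distance, but the intersection identity $\mathcal{G} \cap N(\mathcal{S}_0) = \mathcal{S}_0$ shows this does not happen. Once that is in place, condition (1) follows from the Knill--Laflamme conditions for each inner stabilizer code and condition (2) follows from the orthogonality of distinct cosets $t_{\mathbf{a}} \mathcal{C}_0$ together with the definition of $c$.
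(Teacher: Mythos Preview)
Your proposal is correct and follows essentially the same gauge-fixing construction as the paper: set $\mathcal{S}_{\mathcal{Q}}=\mathcal{S}$, promote the $G_i^Z$ to the classical stabilizer $\mathcal{S}_{\mathcal{C}}$, and take the $G_i^X$ as translation operators, with the quantum distance inherited from the subsystem code via your identity $\mathcal{G}\cap N(\mathcal{S}_0)=\mathcal{S}_0$ (equivalently, $N(\mathcal{S}_0)\setminus\mathcal{S}_0\subseteq N(\mathcal{S})\setminus\mathcal{G}$). The paper's verification is organized a bit differently---it decomposes each error as $E=RSTUV$ along the chain $\mathcal{S}\subset\mathcal{S}_0\subset N(\mathcal{S}_0)\subset N(\mathcal{S})\subset G_n$ and performs a three-way case split on $\wt(E)$ relative to $c$ and $d$ rather than invoking Theorem~\ref{hysubknilllaflamme}---but the underlying coset analysis is the same as yours.
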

\begin{proof}
Let $\mathcal{S}=\left\langle S_{i}\right\rangle$ be the stabilizer group of $\mathcal{C}$, which will be the stabilizer of the hybrid code's outer code. Choose $2\ell r$ operators $G_{i}^{X}$ and $G_{i}^{Z}$ where $i\in\left[\ell r\right]$, so that $$\mathcal{G}=\left\langle\omega,\mathcal{S},G_{i}^{X},G_{i}^{Z}\mid i\in\left[\ell r)\right]\right\rangle.$$ Without loss of generality, we will fix a gauge and let $$\mathcal{S}_{0}=\left\langle \mathcal{S}, G_{i}^{Z}\mid i\in\left[\ell r\right]\right\rangle$$ be the stabilizer of our inner stabilizer code $\mathcal{C}_{0}$.

The centralizer of $\mathcal{S}$ and $\mathcal{S}_{0}$ are given by $$N\!\left(\mathcal{S}\right)=\left\langle\omega, \mathcal{S}, G_{i}^{X}, G_{i}^{Z}, \overline{X_{j}}, \overline{Z_{j}}\mid i\in\left[\ell r\right], j\in\left[\ell k\right]\right\rangle$$ and $$N\!\left(\mathcal{S}_{0}\right)=\left\langle\omega, \mathcal{S}_{0}, \overline{X_{i}}, \overline{Z_{i}}\mid i\in\left[\ell k\right]\right\rangle$$ respectively. The quantum minimum distance of the hybrid code is the minimum weight of one of the logical operators on the quantum information, so it will be the identical to the minimum distance of the subsystem code, given by $d=\wt\!\left(N\!\left(\mathcal{S}\right)\setminus\mathcal{G}\right)$.

The classical minimum distance $c$ is given by the minimum weight of a logical operator on the classical information, so $c=\wt\!\left(N\!\left(\mathcal{S}\right)\setminus N\!\left(S_{0}\right)\right)$. For any two elements $t_{a},t_{b}\notin N\!\left(\mathcal{S}\right)$, $t_{a}\mathcal{C}_{0}$ and $t_{b}\mathcal{C}_{0}$ will be orthogonal to each other if and only if $t_{a}$ and $t_{b}$ are in different cosets of $N\!\left(\mathcal{S}_{0}\right)$. We will use the gauge operators $G_{i}^{X}$ to construct our translation operators as in Theorem \ref{phaseconstruction}. Any error element of the error group $G_{n}$ may be written (modulo a global phase) as $E=RSTUV$, where $R\in\mathcal{S}$ is an element of the quantum stabilizer, and $S$, $T$, $U$, and $V$ are coset representatives of the classical stabilizer $\mathcal{S}_{0}/\mathcal{S}$, the logical quantum operators $N\!\left(\mathcal{S}_{0}\right)/\mathcal{S}_{0}$, the logical classical or the translation operators $N\!\left(\mathcal{S}\right)/N\!\left(\mathcal{S}_{0}\right)$, and the pure errors $G_{n}/N\!\left(\mathcal{S}\right)$ respectively. We now have three cases to consider: (i) $\wt\!\left(E\right)<c,d$, (ii) $c\leq\wt\!\left(E\right)<d$, and (iii) $d\leq\wt\!\left(E\right)<c$:
\begin{enumerate}[(i)]
\item Suppose $\wt\!\left(E\right)<c\leq d$. Then $E\notin N\!\left(\mathcal{S}\right)\setminus\mathcal{G}$ and $E\notin N\!\left(\mathcal{S}\right)\setminus N\!\left(\mathcal{S}_{0}\right)$, meaning that any error is of the form $RSV$. If $V$ is not the identity, then the error can be detected, but if not then the error has no effect on either the quantum or classical information.
\item Suppose $c\leq\wt\!\left(E\right)<d$. Then $E\notin N\!\left(\mathcal{S}\right)\setminus\mathcal{G}$, so any error is of the form $RSUV$. If $V$ is not the identity then the error can be detected, but if not then the classical information may be corrupted. However, the quantum information will be preserved.
\item Suppose $d\leq\wt\!\left(E\right)<c$. Then $E\notin N\!\left(\mathcal{S}\right)\setminus N\!\left(\mathcal{S}_{0}\right)$, so any error is of the form $RSTV$. As in (ii), if $V$ is not the identity then the error can be detected, but if not then the quantum information may be corrupted, while the classical information will be preserved.
\end{enumerate}
Therefore the hybrid code is able to detect all errors in the quantum and classical information less than their respective minimum distances.
\end{proof}

\begin{figure}[t]
\centering
\begin{tikzpicture}
\node at (0,0) (z1) {$Z_{1}$};
\node at (0,-.5) (z2) {$Z_{2}$};
\node at (0,-1.0) (z3) {$Z_{3}$};
\node at (0,-1.5) (z4) {$Z_{4}$};
\node at (0,-2.0) (z5) {$Z_{5}$};
\node at (0,-2.5) (z6) {$Z_{6}$};
\node at (0,-3.0) (x6) {$X_{6}$};
\node at (0,-3.5) (x5) {$X_{5}$};
\node at (0,-4.0) (x4) {$X_{4}$};
\node at (0,-4.5) (x3) {$X_{3}$};
\node at (0,-5.0) (x2) {$X_{2}$};
\node at (0,-5.5) (x1) {$X_{1}$};
\draw [thick, decorate,decoration={brace, mirror, amplitude=4pt},xshift=-10pt,yshift=0pt] (0,.25) -- (0,-1.75) node [black,midway,xshift= -15pt] {$\mathcal{S}$};
\draw [thick, decorate,decoration={brace, mirror, amplitude=4pt},xshift=-10pt,yshift=0pt] (0,-2.25) -- (0,-3.25) node [black,midway,xshift= -15pt] {$\mathcal{L}$};
\draw [thick, decorate,decoration={brace, amplitude=4pt},xshift=10pt,yshift=0pt] (0,-3.25) -- (0,-3.75);
\draw [thick, decorate,decoration={brace, mirror, amplitude=4pt},xshift=-10pt,yshift=0pt] (0,-3.75) -- (0,-5.75) node [black,midway,xshift= -37pt] {Pure Errors};
\draw [thick, decorate,decoration={brace, amplitude=4pt},xshift=10pt,yshift=0pt] (0,.25) -- (0,-2.25);
\draw [thick, decorate,decoration={brace, amplitude=4pt},xshift=15pt,yshift=0pt] (0,-1.0) -- (0,-3.5) node [black,midway,xshift= 15pt] {$\mathcal{G}$};
\end{tikzpicture}
\centering
\begin{tikzpicture}
\node at (0,0) (z1) {$Z_{1}$};
\node at (0,-.5) (z2) {$Z_{2}$};
\node at (0,-1.0) (z3) {$Z_{3}$};
\node at (0,-1.5) (z4) {$Z_{4}$};
\node at (0,-2.0) (z5) {$Z_{5}$};
\node at (0,-2.5) (z6) {$Z_{6}$};
\node at (0,-3.0) (x6) {$X_{6}$};
\node at (0,-3.5) (x5) {$X_{5}$};
\node at (0,-4.0) (x4) {$X_{4}$};
\node at (0,-4.5) (x3) {$X_{3}$};
\node at (0,-5.0) (x2) {$X_{2}$};
\node at (0,-5.5) (x1) {$X_{1}$};
\draw [thick, decorate,decoration={brace, mirror, amplitude=4pt},xshift=-10pt,yshift=0pt] (0,.25) -- (0,-1.75) node [black,midway,xshift= -15pt] {$\mathcal{S}_{\mathcal{Q}}$};
\draw [thick,decorate,decoration={brace, mirror, amplitude=4pt},xshift=-10pt,yshift=0pt] (0,-1.75) -- (0,-2.25) node [black,midway,xshift= -15pt] {$\mathcal{S}_{\mathcal{C}}$};
\draw [thick, decorate,decoration={brace, mirror, amplitude=4pt},xshift=-10pt,yshift=0pt] (0,-2.25) -- (0,-3.25) node [black,midway,xshift= -43pt] {Logical Q. Ops.};
\draw [thick, decorate,decoration={brace, mirror, amplitude=4pt},xshift=-10pt,yshift=0pt] (0,-3.25) -- (0,-3.75) node [black,midway,xshift= -43pt] {Logical C. Ops.};
\draw [thick, decorate,decoration={brace, mirror, amplitude=4pt},xshift=-10pt,yshift=0pt] (0,-3.75) -- (0,-5.75) node [black,midway,xshift= -37pt] {Pure Errors};
\draw [thick, decorate,decoration={brace,mirror,amplitude=4pt},xshift=-35pt,yshift=0pt] (0,.25) -- (0,-2.25) node [black,midway,xshift= -15pt] {$\mathcal{S}_{0}$};
\draw [thick, decorate,decoration={brace, amplitude=4pt},xshift=10pt,yshift=0pt] (0,.25) -- (0,-3.25) node [black,midway,xshift= 22pt] {$N\!\left(\mathcal{S}_{0}\right)$};
\draw [thick, decorate,decoration={brace,amplitude=4pt},xshift=47pt,yshift=0pt] (0,.25) -- (0,-3.75) node [black,midway,xshift= 22pt] {$N\!\left(\mathcal{S}\right)$};
\end{tikzpicture}
\caption{The relationship between a 6 qubit subsystem code (left) and the hybrid stabilizer code (right) derived from it, such as the one given in Example \ref{6ex}. In the hybrid code the translation operators are the logical classical operators and $\mathcal{S}=\mathcal{S}_{\mathcal{Q}}$.}
\label{paulitower}
\end{figure}

From the proof, we can see that encoding the classical message in the phases of the classical stabilizer generators that occurs in Theorem \ref{phaseconstruction} is in effect gauge fixing. The relationship between the stabilizer and gauge groups of the original subsystem code and the quantum and classical stabilizer groups and the translation operators of the hybrid code are shown in Figure \ref{paulitower}.

Additionally, since all hybrid stabilizer codes may be written as a subsystem code, they may all be obtained using this construction. This allows us to make use of results for subsystem codes and apply them to hybrid stabilizer codes. For instance, in \cite{Klappenecker2007} Klappenecker and Sarvepalli showed that any $\mathbb{F}_{q}$-linear Clifford subsystem code satisfies the quantum Singleton bound, and it is conjectured that any subsystem code satisfies the bound \cite{Aly2009,Klappenecker2007}. We extend this conjecture to hybrid stabilizer codes:

\begin{conjecture} \label{singletonconjecture}
An $\left[\!\left[n,k\!:\!m,d\!:\!c\right]\!\right]_{q}$ hybrid stabilizer code satisfies the following variant of the (quantum) Singleton bound: \begin{equation*} k+m\leq n-2\left(d-1\right). \end{equation*}
\end{conjecture}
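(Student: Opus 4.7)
The plan is to reduce Conjecture \ref{singletonconjecture} to the (also conjectural) quantum Singleton bound for subsystem codes. Theorem \ref{hybridconstruction} shows that every $\left[\!\left[n,k,r,d\right]\!\right]_q$ subsystem code yields an $\left[\!\left[n,k\!:\!r,d\!:\!c\right]\!\right]_q$ hybrid stabilizer code, and the discussion immediately after that theorem asserts the converse: every hybrid stabilizer code arises this way, via the identifications $\mathcal{S}_\mathcal{Q}\leftrightarrow\mathcal{S}$, $\mathcal{S}_\mathcal{C}\leftrightarrow\left\langle G_i^Z\right\rangle$, and the translation operators $\leftrightarrow\left\langle G_i^X\right\rangle$. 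Under this correspondence the number $m$ of classical qudits equals the gauge dimension $r$, and the quantum distance $d$ is preserved because in both settings it is given by $\wt\!\left(N\!\left(\mathcal{S}_\mathcal{Q}\right)\setminus\mathcal{G}\right)$. Thus the conjectured inequality $k+m\leq n-2(d-1)$ is exactly the quantum Singleton bound $k+r\leq n-2(d-1)$ for the associated subsystem code.

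The first concrete step is to make this reduction precise: given an $\left[\!\left[n,k\!:\!m,d\!:\!c\right]\!\right]_q$ hybrid stabilizer code, exhibit the associated subsystem code whose stabilizer is $\mathcal{S}_\mathcal{Q}$ and whose gauge group is $\left\langle\mathcal{S}_\mathcal{Q},\mathcal{S}_\mathcal{C},t_i\right\rangle$, and verify by counting generators that it has parameters $\left[\!\left[n,k,m,d\right]\!\right]_q$. The second step is to invoke the subsystem Singleton bound. For $\mathbb{F}_q$-linear Clifford subsystem codes, Klappenecker and Sarvepalli \cite{Klappenecker2007} established this bound, which immediately settles the conjecture for all $\mathbb{F}_q$-linear hybrid stabilizer codes.

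The main obstacle is the general non-linear case, where the subsystem Singleton bound is itself an open conjecture, so the reduction proves Conjecture \ref{singletonconjecture} only modulo the subsystem Singleton conjecture. A direct attack that bypasses the subsystem route appears to run into the same difficulty: applying the ordinary quantum Singleton bound to the inner stabilizer code $\mathcal{C}_0$ yields only $k\leq n-2(d_0-1)\leq n-2(d-1)$, because its stabilizer distance $d_0=\wt\!\left(N\!\left(\mathcal{S}_0\right)\setminus\mathcal{S}_0\right)$ satisfies $d_0\geq d$ but $\mathcal{C}_0$ does not ``see'' the $m$ classical qudits; while applying the bound to the outer stabilizer code $\mathcal{C}$ gives $k+m\leq n-2(d_\mathrm{outer}-1)$ with $d_\mathrm{outer}=\wt\!\left(N\!\left(\mathcal{S}_\mathcal{Q}\right)\setminus\mathcal{S}_\mathcal{Q}\right)\leq d$, which is too weak in the other direction. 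Closing this gap requires exploiting the tensor-product structure of the gauge subsystem in an essential way, which is precisely the step carried out by the linear-case proof of the subsystem Singleton bound; a full proof of the present conjecture would likely entail generalizing that argument beyond the $\mathbb{F}_q$-linear setting, or equivalently settling the general subsystem Singleton conjecture.
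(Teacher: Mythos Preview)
Your analysis is correct and mirrors the paper's own treatment: the statement is a \emph{conjecture}, not a theorem, and the paper does not prove it. The paper's motivation for posing the conjecture is precisely the reduction you describe---every hybrid stabilizer code corresponds to a subsystem code with $r=m$ and the same quantum distance $d$, so the hybrid Singleton bound is equivalent to the subsystem Singleton bound, which is known for $\mathbb{F}_q$-linear Clifford codes \cite{Klappenecker2007} and conjectured in general \cite{Aly2009,Klappenecker2007}. Your additional discussion of why the naive applications of the ordinary quantum Singleton bound to $\mathcal{C}_0$ or to the outer code fail is a helpful elaboration that the paper does not spell out, but it does not change the status of the result.
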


\subsection{Examples of New Hybrid Codes}

We now give several examples of new hybrid codes constructed from subsystem codes using Theorem \ref{hybridconstruction}.

\begin{example}
Using the 6-qubit subsystem code was given by Shaw et al. \cite{Shaw2008} and the construction detailed in Theorem \ref{hybridconstruction}, we get a $\left[\!\left[6,1\!:\!1,3\!:\!2\right]\!\right]_{2}$ hybrid code with the following generators:
\begin{equation*}
\label{6ex}
\left(\mkern-5mu
\begin{tikzpicture}[baseline=-.5ex]
\matrix[
  matrix of math nodes,
  column sep=.25ex, row sep=-.25ex
] (m)
{
Y & I & Z & X & X & Y \\
Z & X & I & I & X & Z \\
I & Z & X & X & X & X \\
Z & Z & Z & I & Z & I \\
I & I & I & X & I & I \\
Z & I & X & I & X & I \\
I & Z & I & I & Z & Z \\
I & I & I & Z & I & Z \\
};
\draw[line width=1pt, line cap=round, dash pattern=on 0pt off 2\pgflinewidth]
  ([yshift=.2ex] m-4-1.south west) -- ([yshift=.2ex] m-4-6.south east);
\draw[line width=.5pt]
  ([yshift=.2ex] m-5-1.south west) -- ([yshift=.2ex] m-5-6.south east);
\draw[line width=.5pt]
  ([yshift=.25ex] m-7-1.south west) -- ([yshift=.25ex] m-7-6.south east);
\draw[line width=.5pt]
  ( m-7-1.south west) -- ( m-7-6.south east);
\end{tikzpicture}\mkern-5mu
\right).
\end{equation*}

Here the stabilizer generators of the subsystem code are given above the dotted line and the gauge operator $G^{Z}$ is directly below it, so that the Pauli elements above the single solid line define the inner code $\mathcal{C}_{0}$. The logical operators on the quantum information are below the single solid line, while the logical operator on the classical information, i.e., the translation operator that takes $\mathcal{C}_{0}$ to $\mathcal{C}_{1}$ and vice versa, is given below the double solid line.

Each individual single-qubit error has a distinct syndrome, with the exception of $Y_{4}$ (the Pauli-$Y$ on the 4th qubit), $Z_{4}$, and $Z_{6}$, which all share the same syndrome. The errors $Z_{4}$ and $Z_{6}$ each map the codeword to an orthogonal subspace, so the quantum information remains unaffected, but it is impossible to determine the classical information as there are two elements of weight 2 in the outer code's centralizer, although the presence of an error on the classical information can be detected. Since $X_{4}$ is in the stabilizer of the code, the error $Y_{4}$ may be viewed as the same as $Z_{4}$.

By using both the quantum and classical Singleton bounds, we find that there cannot be any hybrid code with equivalent parameters constructed from Proposition \ref{badhybrid}. Since the linear programming bounds for hybrid stabilizer codes \cite{Grassl2017} rule out the existence of a $\left[\!\left[6,1\!:\!1,3\right]\!\right]_{2}$ code, this code is in fact genuine and saturates the bound of Conjecture \ref{singletonconjecture}.
\end{example}

In addition to creating hybrid codes with two distinct minimum distances, the construction given in Theorem \ref{hybridconstruction} can also produce genuine hybrid codes with $c=d$. In particular, the construction can take subsystem codes that appear to be at first glance quite poorly designed and produce optimal hybrid codes from them.

\begin{example}\label{9ex}
We now construct a $\left[\!\left[9,3\!:\!1,3\right]\!\right]_{2}$ hybrid stabilizer code. Starting with Gottesman's pure 8-qubit code \cite{Gottesman1996}, we extend it \cite[Lemma 69]{Ketkar2006} to an impure $\left[\!\left[9,3,3\right]\!\right]_{2}$ code. This code can be viewed as a subsystem code with gauge operators $IIIIIIIIX$ and $IIIIIIIIZ$. By our construction above, this gives a $\left[\!\left[9,3\!:\!1,3\!:\!1\right]\!\right]_{2}$ hybrid code, which cannot even detect a single error to the classical information. However, we can slightly alter the quantum stabilizer to improve this distance by appending $X$ to the end of one of the stabilizers, creating a different subsystem code. It is from this code that we construct our hybrid code.
\begin{equation*}
\left(\mkern-5mu
\begin{tikzpicture}[baseline=-.5ex]
\matrix[
  matrix of math nodes,
  column sep=.25ex, row sep=-.25ex
] (m)
{
X & X & X & X & X & X & X & X & I \\
Z & Z & Z & Z & Z & Z & Z & Z & I \\
X & I & X & I & Z & Y & Z & Y & I \\
X & I & Y & Z & X & I & Y & Z & I \\
X & Z & I & Y & I & Y & X & Z & X \\
I & I & I & I & I & I & I & I & X \\
};
\draw[line width=1pt, line cap=round, dash pattern=on 0pt off 2\pgflinewidth]
  ([yshift=.2ex] m-5-1.south west) -- ([yshift=.2ex] m-5-9.south east);
\end{tikzpicture}\mkern-5mu
\right).
\end{equation*}

Since the altered generator is an element of the inner code's stabilizer, we have not altered the inner code's stabilizer. However, the outer code has been altered in such a way that there are no weight 1 or 2 elements except for $IIIIIIIIX$, which is in the inner code's stabilizer. Therefore, by the construction we now have a $\left[\!\left[9,3\!:\!1,3\right]\!\right]_{2}$ hybrid stabilizer code. As mentioned above, this code has better parameters than both the $\left[\!\left[9,1\!:\!2,3\right]\!\right]_{2}$ code of Kremsky et al. \cite{Kremsky2008} and the $\left[\!\left[9,2\!:\!2,3\right]\!\right]_{2}$ code of Grassl et al. \cite{Grassl2017}, and since it meets the linear programming bounds for hybrid stabilizer bounds it is a genuine hybrid stabilizer code.
\end{example}

\begin{example}
We now show how to construct a hybrid code out of Kitaev's well known $\left[\!\left[18,2,3\right]\!\right]_{2}$ toric code \cite{Kitaev1997} which can be converted into an $\left[\!\left[18,2,12,3\right]\!\right]_{2}$ subsystem code similar to the way used by Poulin to convert Shor's 9-qubit code into a subsystem code \cite{Poulin2005}. Using the construction from Theorem \ref{hybridconstruction}, we can construct an $\left[\!\left[18,2\!:\!12,3\!:\!2\right]\!\right]_{2}$ hybrid code.

\begin{equation*}
\label{toric}
\left(\mkern-5mu
\begin{tikzpicture}[baseline=-.5ex]
\matrix[
  matrix of math nodes,
  column sep=.25ex, row sep=-.25ex
] (m)
{
X & X & I & X & X & I & X & X & I & X & X & X & X & X & X & I & I & I \\
I & X & X & I & X & X & I & X & X & I & I & I & X & X & X & X & X & X \\
Z & Z & Z & Z & Z & Z & I & I & I & Z & Z & I & Z & Z & I & Z & Z & I \\
I & I & I & Z & Z & Z & Z & Z & Z & I & Z & Z & I & Z & Z & I & Z & Z \\
X & I & X & I & I & I & I & I & I & X & I & I & I & I & I & X & I & I \\
X & X & I & I & I & I & I & I & I & I & X & I & I & I & I & I & X & I \\
I & I & I & X & I & X & I & I & I & X & I & I & X & I & I & I & I & I \\
I & I & I & X & X & I & I & I & I & I & X & I & I & X & I & I & I & I \\
X & X & I & X & X & I & X & X & I & I & I & I & I & I & I & I & I & I \\
I & X & X & I & X & X & I & X & X & I & I & I & I & I & I & I & I & I \\
Z & I & I & Z & I & I & I & I & I & Z & Z & I & I & I & I & I & I & I \\
I & Z & I & I & Z & I & I & I & I & I & Z & Z & I & I & I & I & I & I \\
I & I & I & Z & I & I & Z & I & I & I & I & I & Z & Z & I & I & I & I \\
I & I & I & I & Z & I & I & Z & I & I & I & I & I & Z & Z & I & I & I \\
I & I & I & I & I & I & I & I & I & Z & Z & I & Z & Z & I & Z & Z & I \\
I & I & I & I & I & I & I & I & I & I & Z & Z & I & Z & Z & I & Z & Z \\
};
\draw[line width=1pt, line cap=round, dash pattern=on 0pt off 2\pgflinewidth]
  ([yshift=.2ex] m-4-1.south west) -- ([yshift=.2ex] m-4-18.south east);
\end{tikzpicture}\mkern-5mu
\right).
\end{equation*}

Using both the classical and quantum Singleton bounds together, we can see that a hybrid code with these parameters cannot be constructed from a pair of quantum and classical codes using Proposition \ref{badhybrid}. This code also saturates the bound of Conjecture \ref{singletonconjecture}.
\end{example}

All of the previous examples are of hybrid codes with $d=3$, but the construction can be used on codes with higher minimum distances.

\begin{example}
 Here we give an example of a $\left[\!\left[12,1\!:\!1,5\!:\!4\right]\!\right]_{2}$ hybrid code, constructed by modifying the extended $\left[\!\left[12,1,5\right]\!\right]_{2}$ stabilizer code from Grassl's online table of quantum codes \cite{GrasslONLINE} in a similar manner as in Example \ref{9ex}.

\begin{equation*}
\left(\mkern-5mu
\begin{tikzpicture}[baseline=-.5ex]
\matrix[
  matrix of math nodes,
  column sep=.25ex, row sep=-.25ex
] (m)
{
X & Z & I & Z & I & X & I & Z & Z & I & I & X \\
I & Y & I & Z & Z & Y & I & I & Z & Z & I & X \\
I & Z & X & I & Z & X & I & I & I & Z & Z & X \\
I & Z & Z & Y & I & Y & Z & I & I & Z& I & I \\
I & I & Z & Z & X & X & Z & Z & I & Z & Z & I \\
I & I & Z & Z & I & I & Y & Z & Z & I & Y & I \\
I & Z & I & Z & Z & Z & Z & Y & I & Z & Y & I \\
I & I & I & Z & I & Z & I & Z & X & Z & X & I \\
I & Z & I & Z & I & I & Z & I & Z& X & X & I \\
Z & Z & Z & Z & Z & Z & I & I & I & I & I & I \\
I & I & I & I & I & I & I & I & I & I & I & X \\
};
\draw[line width=1pt, line cap=round, dash pattern=on 0pt off 2\pgflinewidth]
  ([yshift=.2ex] m-10-1.south west) -- ([yshift=.2ex] m-10-12.south east);
\end{tikzpicture}\mkern-5mu
\right).
\end{equation*}

Since a code with these parameters cannot be constructed using Proposition \ref{badhybrid}, and no $\left[\!\left[12,2,5\right]\!\right]_{2}$ code exists, this is a good code, though whether or not it is genuine depends on the existence of better codes not ruled out by bounds on hybrid codes such as linear programming bounds\cite{Grassl2017,Nemec2019}.

\end{example}

\section{Bacon-Casaccino Hybrid Codes}

We give an explicit construction of hybrid codes using the Bacon-Casaccino family of subsystem codes. This family was introduced in the binary case by Bacon and Casaccino\cite{Bacon2006b} and by Klappenecker and Sarvepalli \cite{Klappenecker2007} in the nonbinary case as a generalization of the Bacon-Shor subsystem codes \cite{Bacon2006a, Shor1995}, and allow for the construction of subsystem codes from pairs of classical linear codes that need not be self-orthogonal. For completeness, we give the result below:

\begin{theorem}[Bacon-Casaccino Codes \cite{Bacon2006b, Klappenecker2007}]\label{bccode}
For $i\in\left\{1,2\right\}$, let $C_{i}\subseteq\mathbb{F}_{q}^{n_{i}}$ be an $\mathbb{F}_{q}$-linear code with parameters $\left[n_{i},k_{i},d_{i}\right]_{q}$. Then there exists a subsystem code with the parameters
\begin{equation*}
\left[\!\left[n_{1}n_{2}, k_{1}k_{2}, \left(n_{1}-k_{1}\right)\!\left(n_{2}-k_{2}\right), \min\!\left\{d_{1}, d_{2}\right\}\right]\!\right]_{q},
\end{equation*}
that is pure to $d_{p}=\min\!\left\{d_{1}^{\perp}, d_{2}^{\perp}\right\}$, where $d_{i}^{\perp}$ denotes the minimum distance of $C_{i}^{\perp}$.
\end{theorem}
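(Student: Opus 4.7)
The plan is to build the subsystem code by arranging $n_1n_2$ qudits on an $n_1\times n_2$ grid and defining the gauge group $\mathcal{G}$ directly from the two codes. For each column $j\in[n_2]$ and each vector $\mathbf{c}$ in a basis of $C_1^{\perp}$, take the operator $X(\mathbf{c})$ applied to column $j$ as a gauge generator; symmetrically, for each row $i\in[n_1]$ and each $\mathbf{d}$ in a basis of $C_2^{\perp}$, take $Z(\mathbf{d})$ applied to row $i$. By the commutation formula from Section~\ref{stab}, the $X$-gauge generator in column $j$ with pattern $\mathbf{c}$ and the $Z$-gauge generator in row $i$ with pattern $\mathbf{d}$ overlap only at the single position $(i,j)$, so their commutator is the phase $\omega^{-\Tr(c_i d_j)}$, giving a well-defined group structure.

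Next I would identify the stabilizer $\mathcal{S}$ as the center of $\mathcal{G}$. Representing any Pauli operator on the grid as a pair $(A,B)$ of $n_1\times n_2$ matrices over $\mathbb{F}_q$ for its $X$- and $Z$-supports, commutation with every $Z$-gauge generator forces each row of $A$ to lie in $C_2$, while membership in the $X$-sector of $\mathcal{G}$ forces each column of $A$ to lie in $C_1^{\perp}$; intersecting yields $X$-stabilizers $A\in C_1^{\perp}\otimes C_2$, of $\mathbb{F}_q$-dimension $(n_1-k_1)k_2$. A symmetric argument produces $Z$-stabilizers $B\in C_1\otimes C_2^{\perp}$ of dimension $k_1(n_2-k_2)$, so $\mathcal{S}$ has $\ell\bigl((n_1-k_1)k_2+k_1(n_2-k_2)\bigr)=\ell(n-k-r)$ independent $\mathbb{F}_p$-generators with $n=n_1n_2$, $k=k_1k_2$, and $r=(n_1-k_1)(n_2-k_2)$, matching the claimed dimensions.

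For the minimum distance I would analyze the cosets $N(\mathcal{S})\setminus\mathcal{G}$. Dualising the stabilizer description via $(V\otimes W)^{\perp}=V^{\perp}\otimes\mathbb{F}_q^{n_2}+\mathbb{F}_q^{n_1}\otimes W^{\perp}$ shows that any $(A,B)\in N(\mathcal{S})$ satisfies $A\in C_1^{\perp}\otimes\mathbb{F}_q^{n_2}+\mathbb{F}_q^{n_1}\otimes C_2$ and $B\in C_1\otimes\mathbb{F}_q^{n_2}+\mathbb{F}_q^{n_1}\otimes C_2^{\perp}$. Subtracting an appropriate gauge element then lets us assume $A$ has every row in $C_2$ and $B$ has every column in $C_1$; since $(A,B)\notin\mathcal{G}$, at least one of $A$ or $B$ is nonzero, and a nonzero row of $A$ is a nonzero codeword of $C_2$ of weight at least $d_2$ while a nonzero column of $B$ is a nonzero codeword of $C_1$ of weight at least $d_1$, giving $\wt(A,B)\geq\min\{d_1,d_2\}$, with equality attained by the explicit representative $(\mathbf{e}_i\otimes\mathbf{v},0)$ for $\mathbf{v}\in C_2$ of minimum weight and $i$ a coordinate on which some codeword of $C_1$ is nonzero. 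Purity is obtained by the same style of argument applied to $\mathcal{S}$ itself: a nonzero $X$-stabilizer in $C_1^{\perp}\otimes C_2$ has a nonzero column in $C_1^{\perp}$, hence weight at least $d_1^{\perp}$, and symmetrically for $Z$-stabilizers.

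The main obstacle is the choice of a minimum-weight gauge-equivalent representative in the distance calculation: once the duality above is used to peel off the $C_1^{\perp}\otimes\mathbb{F}_q^{n_2}$ part of $A$ and the $\mathbb{F}_q^{n_1}\otimes C_2^{\perp}$ part of $B$, the row-in-$C_2$ and column-in-$C_1$ structure forces the weight bound immediately, but establishing the existence of such a representative requires the full duality computation together with verifying that no mixed $X/Z$ logical representative escapes the bound. The remaining steps are direct linear algebra over $\mathbb{F}_q$ combined with the single-position commutation observation from the first paragraph.
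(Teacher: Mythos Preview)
The paper does not actually prove this theorem: it is quoted from \cite{Bacon2006b,Klappenecker2007}, and the text only gives a ``brief explanation of this construction'' in the binary case, describing the grid layout and how the stabilizer is built, with no argument for the dimensions, distance, or purity. Your construction and your identification of the stabilizer and dimension count match this sketch (up to a harmless exchange of $X$ and $Z$), so on the construction itself there is nothing further to compare.

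Where you go beyond the paper, in the distance argument, there is a genuine gap. You write that ``subtracting an appropriate gauge element then lets us assume $A$ has every row in $C_2$'' and then lower-bound the weight of this modified $A$. But the subsystem distance is $\min\{\wt(E):E\in N(\mathcal{S})\setminus\mathcal{G}\}$; you must lower-bound the weight of the \emph{original} $E$, not of some gauge-equivalent representative. Passing to another element of the same $\mathcal{G}$-coset can change the weight in either direction, so as written your argument only shows that \emph{some} representative in each nontrivial coset has weight at least $\min\{d_1,d_2\}$, which is the wrong inequality. Your ``main obstacle'' paragraph misdiagnoses this: the existence of the nice representative is easy; the problem is that it bounds the wrong quantity. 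A correct fix is to project rather than subtract: left-multiplying $A$ by a generator matrix $G_1$ of $C_1$ annihilates the $X$-gauge part $C_1^{\perp}\otimes\mathbb{F}_q^{n_2}$ while mapping the remaining summand to a matrix with rows in $C_2$, so $G_1A$ is nonzero with at least $d_2$ nonzero columns; each nonzero column of $G_1A$ forces the corresponding column of $A$ to be nonzero, giving $\wt(A)\geq d_2$ for the original $A$. The $Z$-side is symmetric, and the CSS structure of $\mathcal{G}$ and $N(\mathcal{S})$ lets you reduce any mixed $(A,B)$ to one of the two pure cases. A minor related slip: the paper's notion of ``pure to $d_p$'' refers to the gauge group, so your purity argument should be applied to $C_1^{\perp}\otimes\mathbb{F}_q^{n_2}$ and $\mathbb{F}_q^{n_1}\otimes C_2^{\perp}$ rather than to the stabilizer; your nonzero-column argument works verbatim there.
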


A subsystem code is said to be \emph{pure to} $d_{p}$ if its gauge group contains no error of weight less than $d_{p}$.

We give a brief explanation of this construction, restricting ourselves to the binary case for simplicity. Denote by $P_{1}$ and $P_{2}$ the parity-check matrices and $G_{1}$ and $G_{2}$ the generator matrices for the classical linear codes $C_{1}$ and $C_{2}$ respectively. We can use the rows of $P_{1}$ to define $n_{1}-k_{1}$ stabilizers $S_{i}=\otimes_{j=1}^{n_{1}}Z^{\left(P_{1}\right)_{ij}}$ of length $n_{1}$, and the stabilizer group of the code is $\mathcal{S}_{1}=\left\langle S_{1},\dots, S_{n_{1}-k_{1}}\right\rangle$, which defines a classical stabilizer code able to detect $d_{1}-1$ Pauli-$X$ errors. Similarly, we can use $P_{2}$ to define $n_{2}-k_{2}$ stabilizers $T_{i}=\otimes_{j=1}^{n_{2}}X^{\left(P_{2}\right)_{ij}}$ that generate the stabilizer group $\mathcal{S}_{2}$. This defines a classical stabilizer code able to detect $d_{2}-1$ Pauli-$Z$ errors, but here the codewords are given in the Hadamard basis $\left\{\ket{+}, \ket{-}\right\}$ rather than the computational basis $\left\{\ket{0}, \ket{1}\right\}$. By classical stabilizer code, we mean a stabilizer code in which the encoded basis states are protected against noise, but a superposition of the encoded basis states are not.

To construct a quantum subsystem code out of these two classical stabilizer codes, we arrange $n_{1}n_{2}$ qubits on an $n_{1}\times n_{2}$ rectangular lattice. We use the stabilizers from $\mathcal{S}_{1}$ to operate on each column, that is each column has a copy of $\mathcal{S}_{1}$ acting on it, and likewise those stabilizers from $\mathcal{S}_{2}$ on the rows. Let $\mathcal{T}_{1}$ be the abelian group generated by $\mathcal{S}_{1}$ acting on the columns and $\mathcal{T}_{2}$ the abelian group generated by $\mathcal{S}_{2}$ acting on the rows. The group $\mathcal{T}=\left\langle\mathcal{T}_{1}, \mathcal{T}_{2}\right\rangle$ is nonabelian, but we can construct an abelian subgroup of $\mathcal{T}$ that commutes with every element in $\mathcal{T}$ using the following construction: take an element $S\in\mathcal{S}_{1}$ and a codeword $v\in C_{2}$, and construct an element of $\mathcal{T}_{1}$ where $S^{v_{j}}$ acts on column $j$. In addition to commuting with all of the elements of $\mathcal{T}_{1}$, every element of this form also commutes with all of the elements of $\mathcal{T}_{2}$. Likewise we can construct elements in $\mathcal{T}_{2}$ that commute with all elements in $\mathcal{T}$. Together, these elements generate the stabilizer group $\mathcal{S}$ of the subsystem code.

\begin{example}\label{9bscode}
As an example we present the 9-qubit Bacon-Shor code, a subsystem code version of the original 9-qubit Shor code. Start with $C_{1}=C_{2}$ as the length 3 repetition code with generator matrix $G$ and parity-check matrix $P$ given by
\begin{equation*}
G=\begin{pmatrix}
1 & 1 & 1
\end{pmatrix}
\text{ and }
P=\begin{pmatrix}
1 & 1 & 0 \\
0 & 1 & 1
\end{pmatrix}.
\end{equation*}
Using the construction, we find that the stabilizer of the code is given by
\begin{equation*}
\mathcal{S}=\left\langle
\begin{tikzpicture}[baseline=-.5ex]
\matrix (m)[matrix of math nodes, nodes in empty cells, ] {
Z & Z & Z \\
Z & Z & Z \\
I & I & I \\
} ;
\draw (m-3-1.south west) rectangle (m-1-3.north east);
\end{tikzpicture},
\begin{tikzpicture}[baseline=-.5ex]
\matrix (m)[matrix of math nodes, nodes in empty cells, ] {
I & I & I \\
Z & Z & Z \\
Z & Z & Z \\
} ;
\draw (m-3-1.south west) rectangle (m-1-3.north east);
\end{tikzpicture},
\begin{tikzpicture}[baseline=-.5ex]
\matrix (m)[matrix of math nodes, nodes in empty cells, ] {
X & X & I \\
X & X & I \\
X & X & I \\
} ;
\draw (m-3-1.south west) rectangle (m-1-3.north east);
\end{tikzpicture},
\begin{tikzpicture}[baseline=-.5ex]
\matrix (m)[matrix of math nodes, nodes in empty cells, ] {
I & X & X \\
I & X & X \\
I & X & X \\
} ;
\draw (m-3-1.south west) rectangle (m-1-3.north east);
\end{tikzpicture}
\right\rangle.
\end{equation*}
Here the stabilizers appear as they would on the $3\times3$ lattice of qubits.

We can also choose our gauge operators in such a way so that they form four anticommuting pairs $\left(G_{i}^{Z}, G_{i}^{X}\right)$:
\begin{align*}
\left(
\begin{tikzpicture}[baseline=-.5ex]
\matrix (m)[matrix of math nodes, nodes in empty cells,ampersand replacement=\& ] {
Z \& I \& I \\
Z \& I \& I \\
I \& I \& I \\
} ;
\draw (m-3-1.south west) rectangle (m-1-3.north east);
\end{tikzpicture},
\begin{tikzpicture}[baseline=-.5ex]
\matrix (m)[matrix of math nodes, nodes in empty cells, ampersand replacement=\&] {
X \& I \& X \\
I \& I \& I \\
I \& I \& I \\
} ;
\draw (m-3-1.south west) rectangle (m-1-3.north east);
\end{tikzpicture}
\right), & \left(
\begin{tikzpicture}[baseline=-.5ex]
\matrix (m)[matrix of math nodes, nodes in empty cells, ampersand replacement=\&] {
I \& I \& I \\
Z \& I \& I \\
Z \& I \& I \\
} ;
\draw (m-3-1.south west) rectangle (m-1-3.north east);
\end{tikzpicture},
\begin{tikzpicture}[baseline=-.5ex]
\matrix (m)[matrix of math nodes, nodes in empty cells, ampersand replacement=\&] {
I \& I \& I \\
I \& I \& I \\
X \& I \& X \\
} ;
\draw (m-3-1.south west) rectangle (m-1-3.north east);
\end{tikzpicture}
\right), \\ \left(
\begin{tikzpicture}[baseline=-.5ex]
\matrix (m)[matrix of math nodes, nodes in empty cells,ampersand replacement=\& ] {
I \& Z \& I \\
I \& Z \& I \\
I \& I \& I \\
} ;
\draw (m-3-1.south west) rectangle (m-1-3.north east);
\end{tikzpicture},
\begin{tikzpicture}[baseline=-.5ex]
\matrix (m)[matrix of math nodes, nodes in empty cells,ampersand replacement=\& ] {
I \& X \& X \\
I \& I \& I \\
I \& I \& I \\
} ;
\draw (m-3-1.south west) rectangle (m-1-3.north east);
\end{tikzpicture}
\right), & \left(
\begin{tikzpicture}[baseline=-.5ex]
\matrix (m)[matrix of math nodes, nodes in empty cells,ampersand replacement=\& ] {
I \& I \& I \\
I \& Z \& I \\
I \& Z \& I \\
} ;
\draw (m-3-1.south west) rectangle (m-1-3.north east);
\end{tikzpicture},
\begin{tikzpicture}[baseline=-.5ex]
\matrix (m)[matrix of math nodes, nodes in empty cells, ampersand replacement=\&] {
I \& I \& I \\
I \& I \& I \\
I \& X \& X \\
} ;
\draw (m-3-1.south west) rectangle (m-1-3.north east);
\end{tikzpicture}
\right). \\
\end{align*}
Note that if we pick a gauge and look at the subspace stabilized by the abelian group $\left\langle \mathcal{S}, G_{1}^{Z}, G_{2}^{Z}, G_{3}^{Z}, G_{4}^{Z}\right\rangle$, it is the same as the original 9-qubit Shor code (up to a permutation of the qubits).
\end{example}

We can apply our hybrid code construction from Theorem \ref{hybridconstruction} to the Bacon-Casaccino subsystem codes to construct hybrid codes out of a pair of linear codes.

\begin{theorem}
For $i\in\left\{1,2\right\}$, let $C_{i}\subseteq\mathbb{F}_{q}^{n_{i}}$ be an $\mathbb{F}_{q}$-linear code with parameters $\left[n_{i},k_{i},d_{i}\right]_{q}$. Then there exists a hybrid code with the parameters
\begin{equation*}
\left[\!\left[n_{1}n_{2}, k_{1}k_{2}\!:\!\left(n_{1}-k_{1}\right)\!\left(n_{2}-k_{2}\right), d\!:\!c\right]\!\right]_{q},
\end{equation*}
where $d=\min\!\left\{d_{1}, d_{2}\right\}$, $c\geq\min\!\left\{d,\max\!\left\{d_{1}^{\perp}, d_{2}^{\perp}\right\}\right\}$, and $d_{i}^{\perp}$ denotes the minimum distance of $C_{i}^{\perp}$.
\end{theorem}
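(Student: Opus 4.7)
The plan is to combine Theorem \ref{bccode} with Theorem \ref{hybridconstruction}. Applying Theorem \ref{bccode} to $C_1$ and $C_2$ produces a subsystem code with parameters $[[n_1 n_2, k_1 k_2, (n_1-k_1)(n_2-k_2), d]]_q$ where $d=\min\{d_1,d_2\}$, pure to $\min\{d_1^\perp,d_2^\perp\}$. Feeding this into Theorem \ref{hybridconstruction} immediately yields a hybrid code of the desired length, quantum dimension $k_1 k_2$, classical dimension $(n_1-k_1)(n_2-k_2)$, and quantum minimum distance $d$. The remaining work is to establish the classical distance bound $c \geq \min\{d, \max\{d_1^\perp, d_2^\perp\}\}$.

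To get the classical bound I would exploit the CSS structure of the Bacon-Casaccino gauge group, whose generators split into an $X$-type family $\mathcal{G}_X$ (parity checks of $C_2$ acting on individual rows of the $n_1 \times n_2$ qudit lattice, with minimum weight $d_2^\perp$) and a $Z$-type family $\mathcal{G}_Z$ (parity checks of $C_1$ acting on individual columns, with minimum weight $d_1^\perp$). Theorem \ref{hybridconstruction} permits us to choose which type of gauge to absorb into the inner stabilizer $\mathcal{S}_0$. Without loss of generality, assume $d_1^\perp \leq d_2^\perp$ and take $\mathcal{S}_0 = \langle \mathcal{S}, \mathcal{G}_Z \rangle$ so that translation operators come from $\mathcal{G}_X$; the opposite inequality is handled by the symmetric choice.

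For any $g \in N(\mathcal{S}) \setminus N(\mathcal{S}_0)$ I would split into two cases. If $g \notin \mathcal{G}$, then $g$ corresponds to a nontrivial logical operator on the quantum subsystem, so $\wt(g) \geq d$ by the subsystem quantum distance. If instead $g \in \mathcal{G} \setminus N(\mathcal{S}_0)$, the CSS decomposition $g = g_X g_Z$ must satisfy $g_X \notin \mathcal{S}_X$, since $g_X \in \mathcal{S}_X$ would place $g$ inside $\langle \mathcal{S}, \mathcal{G}_Z \rangle \subseteq N(\mathcal{S}_0)$. Because $X$ and $Z$ either act on disjoint qudits or combine into a nontrivial operator on the same qudit, $\mathrm{supp}(g) = \mathrm{supp}(g_X) \cup \mathrm{supp}(g_Z)$ and thus $\wt(g) \geq \wt(g_X)$. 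Parameterizing $g_X$ by its row coefficient vectors $w^{(i)} \in \mathbb{F}_q^{n_2-k_2}$, each nonzero row contributes $\wt(w^{(i)} P_2)$, which is a nonzero codeword of $C_2^\perp$ of weight at least $d_2^\perp$; since $g_X \neq 1$ forces some $w^{(i)} \neq 0$, we get $\wt(g_X) \geq d_2^\perp$. Combining the two cases yields $c \geq \min\{d, d_2^\perp\} = \min\{d, \max\{d_1^\perp, d_2^\perp\}\}$.

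The main obstacle in carrying this out is the clean handling of the CSS decomposition: both the support inequality $\wt(g) \geq \wt(g_X)$ and the identification $\mathcal{G} \cap N(\mathcal{S}_0) = \langle \mathcal{S}_X, \mathcal{G}_Z\rangle$ rely on the pure $X/Z$ nature of the Bacon-Casaccino generators inherited from the tensor construction in the proof of Theorem \ref{bccode}. A secondary subtlety to guard against is the degenerate situation where a standard basis vector of $\mathbb{F}_q^{n_1}$ lies in $C_1$, which would prevent a single-row $X$-gauge from leaving $\mathcal{S}_X$; however, this only affects tightness and not the lower bound, since the per-row contribution argument remains valid so long as any nonzero row appears.
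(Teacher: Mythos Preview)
Your proposal is correct and follows essentially the same route as the paper's proof: apply Theorem~\ref{bccode} followed by Theorem~\ref{hybridconstruction}, gauge-fix the $Z$-type family so that translation operators are $X$-type, and then bound $c=\wt\!\left(N(\mathcal{S})\setminus N(\mathcal{S}_0)\right)$ via the CSS structure. The paper organizes the final step as a case split on whether $d_2^\perp\le d$ or $d\le d_2^\perp$, whereas you split on whether $g\in\mathcal{G}$; both reduce to the same two observations, namely that non-gauge elements of $N(\mathcal{S})$ have weight at least $d$ and that nontrivial $X$-type gauge elements have a nonzero row in $C_2^\perp$ and hence weight at least $d_2^\perp$. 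Two minor remarks: your claim $g_X\notin\mathcal{S}_X$ is stronger than needed (you only use $g_X\neq I$, which already follows from $g\notin N(\mathcal{S}_0)$ and the fact that $g_Z$ commutes with the $Z$-type generators of $\mathcal{S}_0$), and the identification you mention in the obstacles paragraph should read $\mathcal{G}\cap N(\mathcal{S}_0)=\mathcal{S}_0$ (i.e., include $\mathcal{S}_Z$), though this does not affect your argument.
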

\begin{proof}
Without loss of generality, suppose that $d_{2}^{\perp}\geq d_{1}^{\perp}$. Using Theorems \ref{hybridconstruction} and \ref{bccode}, we construct a hybrid code, gauge fixing all of the $G_{i}^{Z\left(a\right)}$ operators. The only thing that needs to be checked is the classical distance $c=\wt\!\left(N\!\left(\mathcal{S}\right)\setminus N\!\left(\mathcal{S}_{0}\right)\right)$. Since all of the translation operators are tensor products of $X$-type operators and the identity matrix, we only need to consider the minimum distance of operators of this type. Note that it may be possible to do better than this by picking both $G_{i}^{Z}$ and $G_{j}^{X}$ operators that commute to be fixed. Suppose that $d_{2}^{\perp}\leq d$. Then $\mathcal{G}$ does not contain any $X$-type operators of weight less than $d_{2}^{\perp}$, so $N\!\left(\mathcal{S}\right)$ also does not contain any $X$-type operators of weight less than $d_{2}^{\perp}$, giving us the lower bound $c\geq d_{2}^{\perp}$. If $d\leq d_{2}^{\perp}$, there may be an element of $\left(N\!\left(\mathcal{S}\right)\setminus N\!\left(\mathcal{S}_{0}\right)\right)\setminus\mathcal{G}$ of weight less than $d_{2}^{\perp}$, since a logical quantum $X$-type operator and a translation operator together might have a weight less than each operator separately. However, this weight will still be lower bounded by $d$. Following the same argument with $d_{1}^{\perp}\geq d_{2}^{\perp}$ gives us the lower bound on the classical minimum distance.
\end{proof}

\begin{example}\label{9bsex}
We will continue to use the 9-qubit Bacon-Shor subsystem code from Example \ref{9bscode} and show how to turn it into a $\left[\!\left[9,1\!:\!4,3\!:\!2\right]\!\right]_{2}$ hybrid code. Gauge fix the subsystem code by letting $\left\langle \mathcal{S}, G_{1}^{Z}, G_{2}^{Z}, G_{3}^{Z}, G_{4}^{Z}\right\rangle$ be the stabilizer of the code $\mathcal{C}_{0}$. To send the classical binary message $m=m_{1}m_{2}m_{3}m_{4}$, use $\left(G_{1}^{X}\right)^{m_{1}}\!\left(G_{2}^{X}\right)^{m_{2}}\!\left(G_{3}^{X}\right)^{m_{3}}\!\left(G_{4}^{X}\right)^{m_{4}}$ as the translation operator.

Similar to the previous examples, we can use the quantum and classical Singleton bounds to show that this code is superior to any hybrid code constructed using Proposition \ref{badhybrid}.
\end{example}

As mentioned above, this code cannot be compared to any of the other length 9 hybrid codes mentioned in this paper, but it does have the distinction of being able to transmit the conjectured maximal amount of total information, as it achieves the bound in Conjecture \ref{singletonconjecture}. In fact, this property is shared by all of the hybrid Bacon-Shor codes:

\begin{corollary}
Hybrid codes with parameters $$\left[\!\!\!\:\left[n^{2},1\!:\!\left(n-1\right)^{2},n\!:\!2\right]\!\!\!\:\right]_{\!\!\:2}$$ can be constructed from Bacon-Shor subsystem codes, and no code constructed using Proposition \ref{badhybrid} can have these parameters. Furthermore, these codes saturate the bound given in Conjecture \ref{singletonconjecture}.
\end{corollary}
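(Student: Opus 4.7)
The plan is to specialize the preceding Bacon--Casaccino hybrid construction to the case $C_1 = C_2$ equal to the binary repetition code of length $n$, which has parameters $[n,1,n]_2$ and dual $[n,n-1,2]_2$. Plugging $n_1 = n_2 = n$, $k_1 = k_2 = 1$, $d_1 = d_2 = n$, and $d_1^\perp = d_2^\perp = 2$ into the parameter formula of the theorem immediately yields a hybrid code of length $n^2$ encoding $k_1k_2 = 1$ logical qubit and $(n-1)^2$ classical bits, with quantum distance $\min\{d_1,d_2\} = n$ and classical distance $c \geq \min\{n,\max\{2,2\}\} = 2$. To pin $c$ down to exactly $2$, I would exhibit a matching upper bound by pointing to the explicit $G_i^X$ gauge operators of the Bacon--Shor code (as in Example~\ref{9bscode}): each such operator is a product of two single-qubit $X$ operators on adjacent sites of a row, and these serve as the translation operators in the gauge-fixing construction of Theorem~\ref{hybridconstruction}. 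So $N(\mathcal{S})\setminus N(\mathcal{S}_0)$ contains a weight-$2$ element, forcing $c \leq 2$ and therefore $c = 2$.

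Next, to rule out Proposition~\ref{badhybrid} as a source of the same parameters, I would argue by contradiction using the classical and quantum Singleton bounds together. Any such construction splits the $n^2$ qudits into an $[[n_1,1,n]]_2$ quantum piece and an $[n_2,(n-1)^2,2]_2$ classical piece with $n_1 + n_2 = n^2$. The quantum Singleton bound forces $n_1 \geq 1 + 2(n-1) = 2n-1$, and the classical Singleton bound forces $n_2 \geq (n-1)^2 + 2 - 1 = n^2 - 2n + 2$. Adding,
\begin{equation*}
n_1 + n_2 \;\geq\; (2n-1) + (n^2 - 2n + 2) \;=\; n^2 + 1 \;>\; n^2,
\end{equation*}
a contradiction.

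Finally, saturation of Conjecture~\ref{singletonconjecture} is a direct arithmetic check. With $k=1$, $m=(n-1)^2$, block length $n^2$, and quantum distance $d=n$, the claimed bound $k + m \leq n^2 - 2(d-1)$ becomes $1 + (n-1)^2 \leq n^2 - 2(n-1)$, and both sides equal $n^2 - 2n + 2$, so the bound holds with equality. The only nontrivial point in the whole argument is nailing down $c=2$ exactly (the theorem only supplies the lower bound), and this is settled by exhibiting the weight-$2$ row gauge operators as translation operators.
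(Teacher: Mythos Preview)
Your proof is correct and follows precisely the approach implicit in the paper, which states the corollary without an explicit proof: specialize the Bacon--Casaccino theorem to $C_1=C_2=[n,1,n]_2$, use the weight-$2$ $X$-type gauge operators to force $c=2$, combine the quantum and classical Singleton bounds to rule out Proposition~\ref{badhybrid}, and verify the arithmetic for Conjecture~\ref{singletonconjecture}. One cosmetic remark: in Example~\ref{9bscode} the displayed $G_i^X$ act on columns $1$ and $3$ of a row rather than on adjacent sites, but they are still weight $2$, so your conclusion $c\le 2$ stands unchanged.
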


\section{Conclusion}

In this paper we have shown how to encode classical information in the gauge qudits of subsystem codes, allowing us to use previously unused logical qudits to transmit information. The hybrid codes that arise from this construction are allowed to have separate minimum distances for the quantum and classical information. We give several examples of good hybrid codes using this construction on subsystem codes including a new genuine $\left[\!\left[9,3\!:\!1,3\right]\!\right]_{2}$ code, as well as use the Bacon-Casaccino subsystem code construction to construct hybrid stabilizer codes from a pair of classical codes and their duals, including the Bacon-Shor hybrid codes constructed from the classical repetition codes. We also conjecture that hybrid stabilizer codes must satisfy a variant of the quantum Singleton bound, which follows from a similar conjecture for subsystem codes.

Previous work on hybrid codes required the construction of good families of degenerate quantum codes to construct families of genuine hybrid codes. By relating hybrid codes to the well-studied class of subsystem codes and separating the quantum and classical minimum distances of the code, it should be easier to find families of genuine hybrid codes. One important question raised by having separate minimum distances is that of bounds for hybrid codes when $c\neq d$. In Conjecture \ref{singletonconjecture}, the variant of the quantum Singleton bound does not put any restrictions on the classical distance, so finding bounds such as the linear programming bounds for hybrid codes \cite{Grassl2017,Nemec2019} that put restrictions on both minimum distances would allow for a better understanding of these codes. Other topics of future research include the cases where errors to either the quantum or classical information are corrected while errors to the other are only detected, as we only considered the cases where errors were either both detected or both corrected.

\section*{Acknowledgments}

This research was supported in part by a Texas A\&M University T3 grant. The authors would like to thank Markus Grassl for pointing out an error in the original formulation of Theorem \ref{hybridknilllaflamme}.

\end{document}